\newcommand{\BO}[1]{\mathcal{O}(#1)}
\newcommand{\N}{\mathbb{N}}
\newtheorem{theorem}{Theorem}[section]
\newtheorem{lemma}[theorem]{Lemma}
\newtheorem{definition}[theorem]{Definition}
\begin{document}

\title{Optimal Staged Self-Assembly of General Shapes\footnote{Research supported in part by National Science Foundation Grants CCF-1117672, CCF-1555626, and CCF-1422152.  An abstract version of this work appeared as~\cite{CMS2016OSS}.}}
\author{
Cameron Chalk
\thanks{Department of Computer Science, University of Texas - Rio Grande Valley, 
\protect\url{cameron.chalk01@utrgv.edu}, \protect\url{eric.m.martinez02@utrgv.edu}, \protect\url{robert.schweller@utrgv.edu},
\protect\url{luis.a.vega01@utrgv.edu}, \protect\url{andrew.winslow@utrgv.edu}, \protect\url{timothy.wylie@utrgv.edu}}
\and
Eric Martinez\footnotemark[2]
\and
Robert Schweller\footnotemark[2]
\and
Luis Vega\footnotemark[2]
\and
Andrew Winslow\footnotemark[2]~\footnote{A substantial portion of the work of this author was performed while at Universit\'{e} Libre de Bruxelles.}
\and
Tim Wylie\footnotemark[2]
}
\date{}

\maketitle

\begin{abstract}
We analyze the number of tile types $t$, bins $b$, and stages necessary to assemble $n \times n$ squares and scaled shapes in the staged tile assembly model.
For $n \times n$ squares, we prove $\BO{\frac{\log{n} - tb - t\log t}{b^2} + \frac{\log \log b}{\log t}}$ stages suffice and $\Omega(\frac{\log{n} - tb - t\log t}{b^2})$ are necessary for almost all $n$.
For shapes $S$ with Kolmogorov complexity $K(S)$, we prove $\BO{\frac{K(S) - tb - t\log t}{b^2} + \frac{\log \log b}{\log t}}$ stages suffice and $\Omega(\frac{K(S) - tb - t\log t}{b^2})$ are necessary to assemble a scaled version of $S$, for almost all $S$.
We obtain similarly tight bounds when the more powerful \emph{flexible} glues are permitted.
\end{abstract}

\section{Introduction}\label{sec:introduction}

\emph{Self-assembly} is the process by which many small, simple particles attach via local interactions to form large, complex structures.
In the early 1980s, Ned Seeman~\cite{Seem82} demonstrated that short strands of DNA, attaching via matching base pairs, could be ``programmed'' to carry out such self-assembly at the nanoscale.
In the 1990s, the Ph.D. thesis of Erik Winfree~\cite{Winf98} introduced an approach for DNA self-assembly wherein four-sided DNA ``tiles'' attach edgewise via matching sequences of base pairs.

Such \emph{DNA tile self-assembly} is both experimentally~\cite{ConstantineThesis} and theoretically~\cite{Winf98,RotWin00} capable of complex behaviors.
Study of theoretical models of such tile assembly, beginning with the \emph{abstract Tile Assembly Model (aTAM)} of Winfree~\cite{Winf98}, has grown into a field with dozens of model variations~\cite{Doty-2012a,nacoTileSurvey} compared via a web of reductions~\cite{Woods2015}, reminiscent of the development of structural complexity for traditional models of computation.

\textbf{Staged tile assembly.}
The \emph{staged (tile assembly) model} is a generalization of the \emph{two-handed (2HAM)}~\cite{RNaseSODA2010,2HABTO,DPR2013ICALP,DDFIRSS07} or \emph{hierarchical}~\cite{CheDot12,Doty16} \emph{tile assembly model}.
In the aTAM, single tiles attach to a growing multi-tile seed assembly.
The 2HAM additionally permits multi-tile assemblies attach to each other, yielding a growth process strictly more general than that of the aTAM~\cite{2HABTO}.

The staged assembly model extends the 2HAM by including the ability to simultaneously carry out separate assembly processes in multiple \emph{bins}.
Such separation is motivated by similar experimental practices wherein many reactions are carried out simultaneously in distinct test tubes, often automated via liquid-handling robots~\cite{Ke1177}.

In the staged model, a bin begins with a set of \emph{input assemblies} previously assembled in other bins.
These assemblies are repeatedly attached pairwise to yield a growing set of \emph{producible assemblies} until all possibilities are exhausted.
The producible assemblies not attachable to any other producible assemblies during this process are the \emph{output assemblies} of the bin, and may be used as input assemblies for other bins.

An instance of the staged model, called a \emph{staged system}, consists of many bins stratified into stages, and a \emph{mix graph} that specifies which bins in each stage supply input assemblies for bins in the following stage.
Input assemblies for bins in the first stage are sets of individual tiles, and the output assemblies of bins in the final stage are considered the \emph{output} of the system.

\textbf{Complexity in staged assembly.}
A common goal in the design of self-assembling systems is the assembly of a desired shape.
Here we consider the design of ``efficient'' systems with minimum complexity that assemble a given shape.
Three metrics exist for staged systems:
\begin{itemize}
\item \emph{Tile type complexity:} the number of distinct tile types used in the system.
\item \emph{Bin complexity:} the maximum number of bins used in a stage.
\item \emph{Stage complexity:} the number of stages.
\end{itemize}
Efficient assembly of restricted classes of shapes~\cite{DDFIRSS07,DFS2015NGA} and $1 \times n$ ``patterned lines''~\cite{demaine2013ODS,W2015SSA} have been considered, and efficient assembly in other variants and further extensions and variants of the staged self-assembly model have also been studied~\cite{RNaseSODA2010,BMS2012TUS,RNAPods,Labean05stepwisedna,MSS2012SWT,IDSUSA}.

Two classes of shapes have become the defacto standards~\cite{Winslow-2016b} for measuring assembly efficiency: squares and and general shapes (with scaling permitted).
Efficient assembly of these two classes was first considered in the aTAM, where matching upper and lower bounds on the tile complexity were obtained~\cite{RotWin00,SolWin07,AdChGoHu01}.

\textbf{Our results.}
Here we give nearly matching upper and lower (trivariate) bounds for assembling these shapes in the staged model; our results are summarized here and in Table~\ref{tab:summary}.
Due to the multivariate nature of the results, they are described using two complexity measures (tile type and bin) as ``independent'' variables and the third measure (stage) as the ``dependent'' variable.
That is, the results are upper and lower bounds on the minimum stage complexity of systems with a fixed number of tile types $t$ and bins $b$, where $t$ and $b$ are restricted to be larger than some fixed constant.\footnote{Such a restriction is necessary, as systems with a single tile type are incapable of assembling finite non-trivial shapes.}

For $n \times n$ squares, we prove the stage complexity is $\BO{\frac{\log{n} - tb - t\log t}{b^2} + \frac{\log \log b}{\log t}}$ and, for almost all $n$, $\Omega(\frac{\log{n} - tb - t\log t}{b^2})$.\footnote{The fraction of values for which the statement holds reaches~1 in the limit as $n \rightarrow \infty$.}
For shapes $S$ with Kolmogorov complexity $K(S)$, we prove the stage complexity is $\BO{\frac{K(S) - tb - t\log t}{b^2} + \frac{\log \log b}{\log t}}$ and 
$\Omega(\frac{K(S) - tb - t\log t}{b^2})$.


We obtain similar results when \emph{flexible glues}~\cite{AGKS05g}, glues that can form bonds with non-matching glues, are permitted.
In this case, the stage complexity for $n \times n$ squares is reduced to $\BO{\frac{\log n - t^2 - tb}{b^2} + \frac{\log \log b}{\log t}}$ and, for almost all $n$, $\Omega(\frac{\log n - t^2 - tb}{b^2})$, and the stage complexity for general shapes is reduced to $\BO{\frac{K(S) - t^2 - tb}{b^2} + \frac{\log \log b}{\log t}}$ and $\Omega(\frac{K(S) - t^2 - tb}{b^2})$.

Our upper bounds both use a new technique to efficiently assemble \emph{bit string pads}: constant-width assemblies with an exposed sequence of glues encoding a desired bit-string.
This technique converts all three forms of system complexity (tile, bin, and stage) into bits of the string with only a constant-factor loss of efficiency.
In other words, the number of bits in the bit string pad grows proportionally to the number of bits needed to describe the system.

\textbf{Comparison with prior work.}
Because our results are optimal across all choices of tile type and bin complexity, these results generalize and, in some cases, improve on prior results.
For instance, Theorem~\ref{thm:nxn} implies assembly of $n \times n$ squares using $\BO{1}$ bins, $\BO{\frac{\log n}{\log\log n}}$ tile types, and $\BO{1}$ stages, matching a result of~\cite{AdChGoHu01} up to constant factors.
For flexible glues, this is improved to $\BO{\sqrt{\log n}}$ tile types, a result of~\cite{AGKS05g}.

The same theorem also yields assembly of $n \times n$ squares by systems with $\BO{1}$ bins, $\BO{1}$ tile types, and $\BO{\log{n}}$ stages (a result of~\cite{DDFIRSS07}) and by systems with $\BO{\sqrt{\log{n}}}$ bins, $\BO{1}$ tile types, and $\BO{\log\log\log{n}}$ stages, substantially improving over the $\mathcal{O}(\log\log n)$ stages used in~\cite{DDFIRSS07}.
For constructing scaled shapes, Theorem~\ref{thm:shapes} implies systems using $\BO{1}$ bins, $\BO{\frac{K(S)}{\log K(S)}}$ tile types, and $\BO{1}$ stages, a result of~\cite{SolWin07}.

\begin{table}[ht]
\centering
\renewcommand{\arraystretch}{1.3}
    \begin{tabular}{| c | c | c | c | c |}
    \multicolumn{5}{c}{\textbf{Standard Glue Stage Complexity Results}} \\ \hline
      \textbf{Shape}                    & \textbf{Upper Bound}                      & \textbf{Theorem} & \textbf{Lower Bound} & \textbf{Theorem} \\
      \hline
      $n \times n$   & $\BO{\frac{\log n - t\log t - tb}{b^2} + \frac{\log \log b}{\log t}}$    & \ref{thm:nxn} & $\Omega(\frac{\log n - t\log t - tb}{b^2})$ & \ref{thm:nxnLower} \\
      \hline
      Scaled shapes                     & $\BO{\frac{K(S) - t\log t - tb}{b^2} + \frac{\log \log b}{\log t}}$    & \ref{thm:shapes} & $\Omega(\frac{K(S) - t\log t - tb}{b^2})$ & \ref{thm:shapesLower} \\
      \hline
      \multicolumn{5}{c}{\textbf{Flexible Glue Stage Complexity Results}} \\
      \hline

      $n \times n$   & $\BO{\frac{\log n - t^2 - tb}{b^2} + \frac{\log \log b}{\log t}}$    & \ref{thm:nxnFlex} & $\Omega(\frac{\log n - t^2 - tb}{b^2})$ & \ref{thm:nxnLowerFlex} \\
      \hline
      Scaled shapes                     & $\BO{\frac{K(S) - t^2 - tb}{b^2} + \frac{\log \log b}{\log t}}$    & \ref{thm:shapesFlex} & $\Omega(\frac{K(S) - t^2 - tb}{b^2})$ & \ref{thm:shapesLowerFlex} \\
      \hline
    \end{tabular}
    \caption{The main results obtained in this work: upper and lower bounds on the number of stages of a staged self-assembly system with $b$ bins and $t$ tile types uniquely assembling $n \times n$ squares and scaled shapes. $K(S)$ denotes the Kolmogorov complexity of a shape.}
    \label{tab:summary}
\end{table}

\section{The Staged Assembly Model}\label{sec:model}

\textbf{Tiles.}
A \emph{tile} is a non-rotatable unit square with each edge labeled with a \emph{glue} from a set $\Sigma$.
Each pair of glues $g_1, g_2 \in \Sigma$ has a non-negative integer \emph{strength}, denoted ${\rm str}(g_1, g_2)$.
Every set $\Sigma$ contains a special \emph{null glue} whose strength with every other glue is 0.
If the glue strengths do not obey ${\rm str}(g_1,g_2) = 0$ for all $g_1 \neq g_2$, then the glues are \emph{flexible}.
Unless otherwise stated, we assume that glues are not flexible.

\textbf{Configurations, assemblies, and shapes.}
A \emph{configuration} is a partial function $A : \mathbb{Z}^2 \rightarrow T$ for some set of tiles $T$, i.e., an arrangement of tiles on a square grid.
For a configuration $A$ and vector $\vec{u} = \langle u_x, u_y \rangle \in \mathbb{Z}^2$, $A + \vec{u}$ denotes the configuration $f \circ A$, where $f(x, y) = (x + u_x, y + u_y)$.
For two configurations $A$ and $B$, $B$ is a \emph{translation} of $A$, written $B \simeq A$, provided that $B = A + \vec{u}$ for some vector $\vec{u}$.
For a configuration $A$, the \emph{assembly} of $A$ is the set $\tilde{A} = \{ B : B \simeq A \}$.
An assembly $\tilde{A}$ is a \emph{subassembly} of an assembly $\tilde{B}$, denoted $\tilde{A} \sqsubseteq \tilde{B}$, provided that there exists an $A\in \tilde{A}$ and $B\in \tilde{B}$ such that $A \subseteq B$.
The \emph{shape} of an assembly $\tilde{A}$ is $\{ {\rm dom}(A) : A \in \tilde{A}\}$ where dom() is the domain of a configuration.
A shape $S'$ is a \emph{scaled} version of shape $S$ provided that for some $k \in \N$ and $D \in S$, $\bigcup_{(x, y) \in D} \bigcup_{(i, j) \in \{0, 1, \dots, k-1\}^2} (kx + i, ky + j) \in S'$.

\textbf{Bond graphs and stability.}
For a configuration~$A$, define the \emph{bond graph}~$G_A$ to be the weighted grid graph in which each element of~${\rm dom}(A)$ is a vertex, and the weight of the edge between a pair of tiles is equal to the strength of the coincident glue pair.
A configuration is \emph{$\tau$-stable} for $\tau \in \N$ if every edge cut of $G_A$ has strength at least $\tau$, and is \emph{$\tau$-unstable} otherwise.
Similarly, an assembly is \emph{$\tau$-stable} provided the configurations it contains are $\tau$-stable.
Assemblies $\tilde{A}$ and $\tilde{B}$ are \emph{$\tau$-combinable} into an assembly $\tilde{C}$ provided there exist $A \in \tilde{A}$, $B \in \tilde{B}$, and $C \in \tilde{C}$ such that $A \bigcup B = C$, $\rm dom(A) \bigcap \rm dom(B) = \emptyset$, and $\tilde{C}$ is $\tau$-stable.

\textbf{Two-handed assembly and bins.}
We define the assembly process via bins.
A bin is an ordered tuple $(S,\tau)$ where $S$ is a set of \emph{initial} assemblies and $\tau \in \N$ is the \emph{temperature}.
In this work, $\tau$ is always equal to $2$ for upper bounds, and at most some constant for lower bounds.
For a bin $(S, \tau)$, the set of \emph{produced} assemblies $P'_{(S,\tau)}$ is defined recursively as follows:
\begin{enumerate}
\item $S \subseteq P'_{(S,\tau)}$.
\item If $A,B \in P'_{(S,\tau)}$ are $\tau$-combinable into $C$, then $C \in P'_{(S,\tau)}$.
\end{enumerate}
A produced assembly is \emph{terminal} provided it is not $\tau$-combinable with any other producible assembly, and the set of all terminal assemblies of a bin $(S,\tau)$ is denoted $P_{(S,\tau)}$.
That is, $P'_{(S,\tau)}$ represents the set of all possible assemblies that can assemble from the initial set $S$, whereas $P_{(S,\tau)}$ represents only the set of assemblies that cannot grow any further.

The assemblies in $P_{(S,\tau)}$ are \emph{uniquely produced} iff for each $x \in P'_{(S, \tau)}$ there exists a corresponding $y \in P_{(S,\tau)}$ such that $x \sqsubseteq y$.
Unique production implies that every producible assembly can be repeatedly combined with others to form an assembly in $P_{(S,\tau)}$.

\textbf{Staged assembly systems.}
An \emph{$r$-stage $b$-bin mix graph} $M$ is an acyclic $r$-partite digraph consisting of $rb$ vertices $m_{i,j}$ for $1 \leq i \leq r$ and $1\leq j \leq b$, and edges of the form $(m_{i,j}, m_{i+1,j'})$ for some $i, j, j'$.
A \emph{staged assembly system} is a 3-tuple $\langle M_{r,b}, \{T_1, T_2, \dots, T_b\}, \tau \rangle$ where $M_{r,b}$ is an $r$-stage $b$-bin mix graph, $T_i$ is a set of tile types, and $\tau \in \N$ is the temperature.
Given a staged assembly system, for each $1\leq i \leq r$, $1\leq j \leq b$, a corresponding bin $(R_{i,j}, \tau)$ is defined as follows:

\begin{enumerate}
\item $R_{1,j}= T_j$ (this is a bin in the first stage);
\item For $i\geq 2$,
$\displaystyle R_{i,j}= \Big(\bigcup_{k:\ (m_{i-1,k},m_{i,j})\in M_{r,b}} P_{(R_{(i-1,k)},\tau_{i-1,k})}\Big)$.
\end{enumerate}

Thus, bins in stage 1 are tile sets $T_j$, and each bin in any subsequent stage receives an initial set of assemblies consisting of the terminally produced assemblies from a subset of the bins in the previous stage as dictated by the edges of the mix graph.\footnote{The original staged model~\cite{DDFIRSS07} only considered $\BO{1}$ distinct tile types, and thus for simplicity allowed tiles to be added at any stage (since $\mathcal{O}(1)$ extra bins could hold the individual tile types to mix at any stage). Because systems here may have super-constant tile complexity, we restrict tiles to only be added at the initial stage.}
The \emph{output} of a staged system is the union of the set of terminal assemblies of the bins in the final stage.\footnote{This is a slight modification of the original staged model~\cite{DDFIRSS07} in that there is no requirement of a final stage with a single output bin. It may be easier in general to solve problems in this variant of the model, so it is considered for lower bound purposes. However, all results herein apply to both variants of the model.}
The output of a staged system is \emph{uniquely produced} provided each bin in the staged system uniquely produces its terminal assemblies.

\begin{figure}[t!]
    \centering
    \hspace*{-.90cm}
    \subcaptionbox{}{\includegraphics[width=0.47\textwidth]{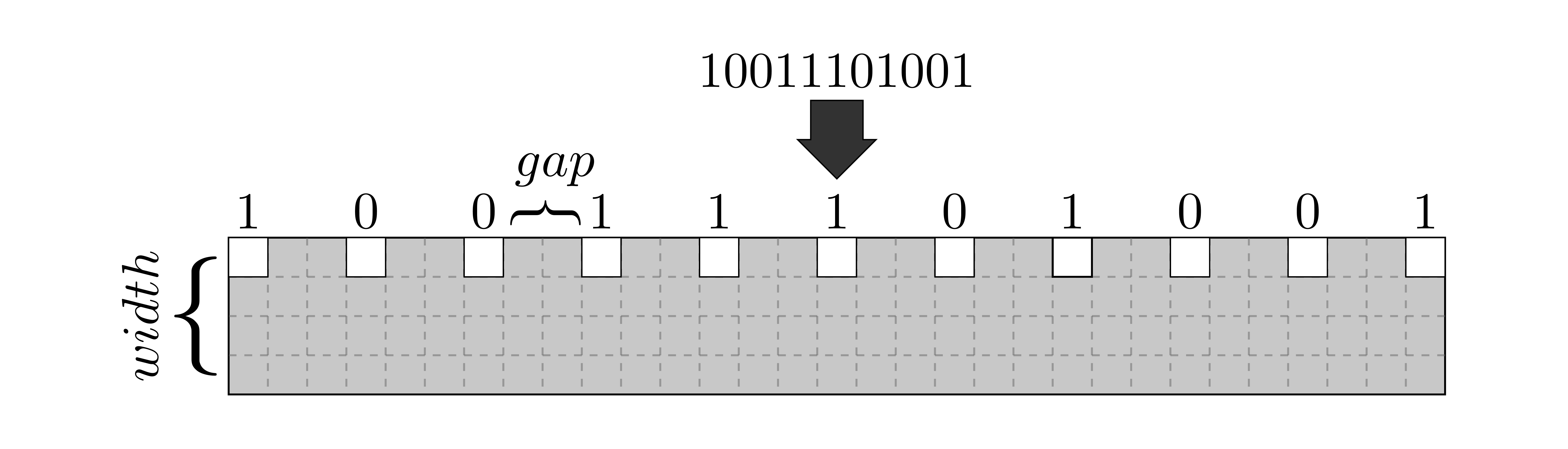}}
    \hspace*{.40cm}
    \subcaptionbox{}{\includegraphics[width=0.47\textwidth]{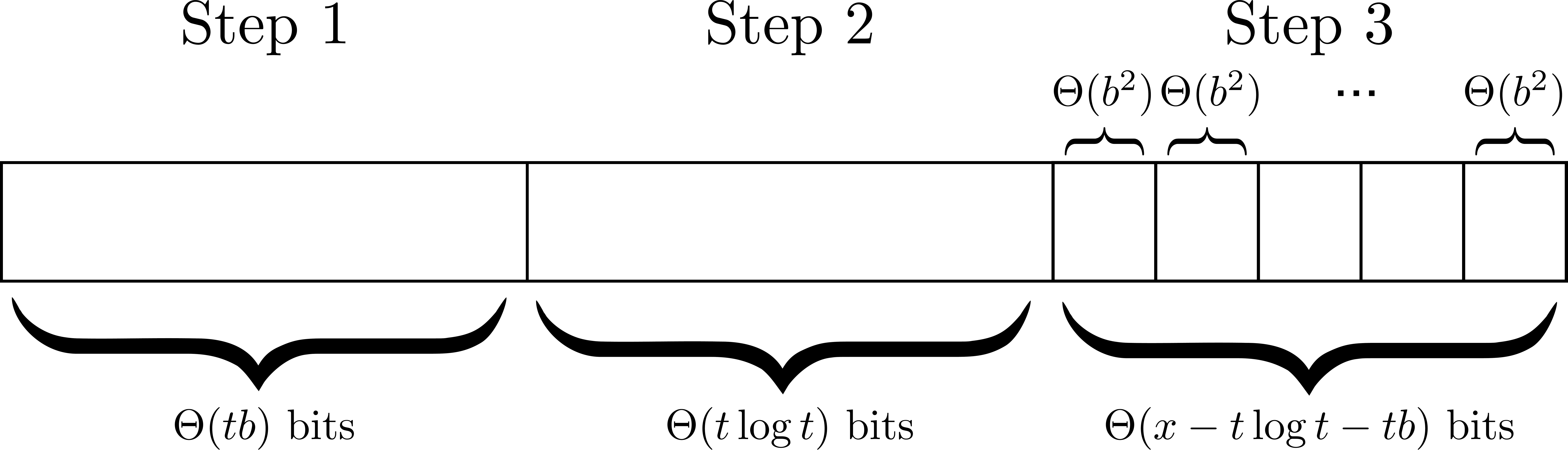}}
    \caption{For compactness, glues are denoted by their subscripts in figures for the remainder of the paper (e.g. $g_0$ denoted as $0$).
    (a) An example bit string $r=10011101001$ encoded as a width-$4$ gap-$2$ $11$-\emph{bit string pad} whose north-facing glues correspond to the bits in $r$.
    (b) The decomposition of an $x$-bit string pad's bits into those encoded by the three steps of a staged system with $t$ tile types and $b$ bins.}
    \label{fig:mt_overview}
\end{figure}

\section{Key Lemmas} \label{sec:key_lemmas}

Our results rely on two key lemmas.
The first proves an upper bound on the information content of a staged system, and thus a lower bound on the stage complexity for assembling shapes containing a specified quantity of information.
The second is a formal statement of the previously mentioned bit string pad construction.

\begin{lemma}
\label{lem:info-lowerbound}
A staged system of fixed temperature $\tau$ with $b$ bins, $s$ stages, and $t$ tile types can be specified using $\BO{t\log{t} + sb^2 + tb}$ bits.
Such a system with flexible glues can be specified using $\BO{t^2 + sb^2 + tb}$ bits.
\end{lemma}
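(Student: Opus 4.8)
The plan is to count, for each component of a staged system, the number of bits needed to specify it, and then show these counts sum to the claimed bound. A staged system is the triple $\langle M_{r,b}, \{T_1, \dots, T_b\}, \tau \rangle$. Since $\tau$ is a fixed constant, it contributes $\BO{1}$ bits. So I would account separately for (i) the tile types, (ii) the glue function restricted to those tiles, and (iii) the mix graph.

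First I would bound the cost of specifying the tile types together with their glue interactions. With $t$ tile types, each tile has four edges, and each edge is labeled by one of the glues appearing in the system. Since at most $4t$ distinct glues can appear on the tiles, each glue label needs $\BO{\log t}$ bits, so describing all four edges of all $t$ tiles costs $\BO{t \log t}$ bits. For the non-flexible case, the strength function is determined by specifying, for each glue, its self-strength (a constant-bounded integer since $\tau$ is constant), because non-matching glues have strength $0$ by assumption; this is a further $\BO{t \log t}$-or-less contribution, absorbed into the $\BO{t\log t}$ term. For flexible glues the strength function is an arbitrary map on pairs of glues: with $\BO{t}$ glues there are $\BO{t^2}$ pairs, each needing $\BO{1}$ bits since strengths are bounded by the constant $\tau$, giving the $\BO{t^2}$ term that replaces $\BO{t\log t}$ in the flexible statement.

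Next I would bound the mix graph. The graph $M_{r,b}$ is an $r$-partite (here $r = s$) digraph on $sb$ vertices whose only possible edges go from stage $i$ to stage $i+1$. Between two consecutive stages there are at most $b^2$ possible edges, and there are $s-1$ stage-transitions, so the adjacency information is a bit string of length $\BO{sb^2}$; together with specifying which tile set $T_j$ seeds each of the $b$ first-stage bins (a choice among $b$ indexed tile sets, costing $\BO{b \log b}$, absorbed), this yields the $\BO{sb^2}$ term. Summing the tile-plus-glue cost and the mix-graph cost gives $\BO{t\log t + sb^2 + tb}$ in the standard case and $\BO{t^2 + sb^2 + tb}$ with flexible glues; the $tb$ term accounts for assigning the (up to $b$) initial tile sets among the $t$ tile types, i.e. specifying for each of the $t$ tiles in which of the $b$ first-stage bins it is placed, costing $\BO{t \log b} = \BO{tb}$ bits.

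The main obstacle is bookkeeping which contributions genuinely dominate and verifying nothing is undercounted: in particular I must be careful that glue labels are drawn from a pool whose size is $\BO{t}$ (justified because only glues actually appearing on some tile edge matter, and there are at most $4t$ such edges), so that each label costs $\BO{\log t}$ rather than $\BO{\log |\Sigma|}$ for the full ambient glue set $\Sigma$. I would also confirm that the constant temperature and the bounded glue strengths let me treat strength values as $\BO{1}$-bit quantities, which is what collapses the flexible strength table to $\BO{t^2}$ and the standard case's diagonal strengths into the $\BO{t\log t}$ term. Once these encodings are fixed, the bit counts add up directly to the stated bounds.
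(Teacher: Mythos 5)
Your decomposition---tile types, glue strength function, mix graph, and stage-1 tile placement, with $\tau$ contributing $\BO{1}$ bits---is exactly the paper's, and your counts for the first three parts match its proof: $\BO{t\log t}$ bits for edge labels drawn from a pool of at most $4t$ glues, $\BO{t}$ bits (absorbed) for the diagonal strengths in the standard case versus $\BO{t^2}$ bits for the full pairwise strength table with flexible glues, and $\BO{b^2(s-1)} = \BO{sb^2}$ bits for the optional edges between consecutive stages.

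The one genuine flaw is in your justification of the $tb$ term. You encode the stage-1 seeding by ``specifying for each of the $t$ tiles in which of the $b$ first-stage bins it is placed,'' costing $t\log b$ bits. But in the model each first-stage bin $j$ receives an arbitrary \emph{set} $T_j$ of tile types, and these sets need not be disjoint---the paper's constructions routinely place the same tile types (connectors, fillers, binary gadgets) into many stage-1 bins simultaneously. A $t\log b$-bit encoding assigns each tile type to a single bin and therefore cannot represent such systems; the inequality $t\log b \le tb$ does not rescue the argument, because the problem is the expressiveness of the encoding, not its length. The repair is precisely the paper's encoding: one membership bit per (tile type, bin) pair, exactly $tb$ bits, which your stated bound already budgets for. (Separately, the $\BO{b\log b}$ bits you spend matching tile sets to bins are unnecessary, since the model fixes $R_{1,j} = T_j$ by index, but that over-count is harmless.) With the assignment encoding corrected, your proof coincides with the paper's.
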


\begin{proof}
A staged system can be specified in four parts: the tile types, the glue function, the mix graph, and the assignment of tile types to stage-1 bins.
We separately bound the number of bits required to specify each.

A set of $t$ tile types has up to~$4t$ glue types, so specifying each tile requires $\BO{\log{t}}$ bits, and the entire tile set takes $\BO{t\log{t}}$ bits.
If the system does not have flexible glues, then the glue function can be specified in $\BO{4t} = \BO{t}$ bits, using $\BO{\log{\tau}} = \BO{1}$ bits per glue type to specify the glue's strength.
If the system has flexible glues, then the glue function can be specified using $\BO{1}$ bits per pairwise glue interaction and $\BO{(4t)^2} = \BO{t^2}$ bits total.

The mix graph consists of~$bs$ nodes.
Each pair of nodes in adjacent stages optionally share a directed edge.
Thus specifying these edges takes $\BO{b^2(s-1)} = \BO{b^2s}$ bits.
The assignment of tile types to stage-1 bins requires one bit per each choice of tile type and bin, or $\BO{tb}$ bits total.

Thus a staged system without flexible glues can be specified in $\BO{t\log{t} + t + b^2s + tb}$ bits, and otherwise in $\BO{t\log{t} + t^2 + b^2s + tb}$ bits.
\end{proof}

It immediately follows from Lemma~\ref{lem:info-lowerbound} that for almost all bit strings, any staged system with $b$ bins and $t$ tiles that encodes the bit string must have $\Omega(\frac{x-tb - t\log t }{b^2})$ stages with standard glues and $\Omega(\frac{x - tb - t^{2}}{b^2})$ stages with flexible glues, where $x$ is the length of the bit string.

Our positive results rely mainly on efficient assembly of \emph{bit string pads}: assemblies that expose a sequence of north-facing glues that encode a bit string. An example is shown in Figure~\ref{fig:mt_overview}(a).
Squares and general scaled shapes are assembled by combining a universal set of ``computation'' tiles with efficiently assembled ``input'' bit string pads.

\begin{definition}[Bit string pad]
A \emph{width-$k$ gap-$f$ $r$-bit string pad} is a $k\times (f(r-1)+1)$ rectangular assembly with $r$ glues from a set of two glue types $\{g_0, g_1\}$ exposed on the north face of the rectangle at intervals of length $f$, starting from the westmost northern edge.
All remaining exposed glues on the north tile edges have some common label $g_F$.
The remaining exposed south, east, and west tile edges have glues $g_S$, $g_E$, and $g_W$.
A bit string pad \emph{represents} a given string of $r$ bits if the exposed $g_0$ and $g_1$ glues from west to east, on the north facing edges, are equal to the given bit string.
\end{definition}

For our upper bounds we efficiently construct bit string pads by decomposing the target pad into three subpads and assembling each in a separate step using a different source of system complexity (see Figure~\ref{fig:mt_overview}(b)):
\begin{itemize}
\item Step~1: $\Theta(tb)$ bits from assigning tile types to stage-1 bins (Section~\ref{sec:step_1}).
\vspace*{-.1cm}
\item Step~2: $\Theta(t\log{t})$ bits from the tile types themselves as in~\cite{AdChGoHu01,AGKS05g,FMS2015OPS,SolWin07} (Section~\ref{sec:step_2}).
\vspace*{-.1cm}
\item Step~3: $\Theta(x - t\log{t} - tb)$ bits from the mix graph using a variant of ``crazy mixing''~\cite{DDFIRSS07} (Section~\ref{sec:step_3}).
\end{itemize}
If flexible glues are permitted, Step~2 is modified as in~\cite{AGKS05g} to achieve $\Theta(t^2)$ bits from tile types.
These subpads are then combined into the complete pad to achieve the following result:

\begin{lemma}
\label{lem:xbit_string}
There exists a constant $c$ such that for any $b,t \in \mathbb{N}$ with $b,t > c$ and any bit string $S$ of length $x$,
there exists a $\tau = 2$ staged assembly system with $b$ bins, $t$ tiles, and $\mathcal{O}(\frac{x-tb-t\log t}{b^2} + \frac{\log \log b}{\log t})$ stages whose uniquely produced output is a width-$7$ gap-$\Theta(\log{b})$ $x$-bit string pad representing $S$.
\end{lemma}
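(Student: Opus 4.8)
The plan is to prove the lemma constructively, realizing the three-step decomposition advertised in Figure~\ref{fig:mt_overview}(b). I would split the target string $S$ into three contiguous blocks $S_1 S_2 S_3$ of lengths $\Theta(tb)$, $\Theta(t\log t)$, and $x - \Theta(tb) - \Theta(t\log t)$, build a separate subpad for each block using a different source of system complexity, and then horizontally concatenate the three subpads into the final width-$7$ gap-$\Theta(\log b)$ pad. Since the three blocks partition all $x$ bits and each source of complexity is pushed to the information-theoretic maximum identified by Lemma~\ref{lem:info-lowerbound} (tiles give $\Theta(t\log t)$ bits, tile-to-bin assignment gives $\Theta(tb)$ bits, and the mix graph gives $\Theta(b^2)$ bits per stage), the total stage count should match $\Omega(\frac{x - tb - t\log t}{b^2})$ up to constants, plus the lower-order cost of assembling the pad's geometry.

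I would dispatch Step~2 (Section~\ref{sec:step_2}) first, as it is the most standard: using the $t$ tile types, I build a subpad encoding $\Theta(t\log t)$ bits by the now-classical encoding of a string into the glues and forced assembly order of a tile set, as in~\cite{AdChGoHu01,SolWin07} (with the flexible-glue variant of~\cite{AGKS05g} when flexible glues are allowed). These tiles deterministically assemble a single terminal subpad inside $\BO{1}$ bins, contributing no stages beyond those needed to form the pad skeleton. For Step~1 (Section~\ref{sec:step_1}) I would design a fixed family of ``bit tiles'' so that the subset of tile types placed in each of the $b$ stage-$1$ bins deterministically assembles a length-$\Theta(t)$ subpad whose exposed north glues read off the $\Theta(t)$ bits selected by that bin's tile-to-bin assignment; across the $b$ bins this captures $\Theta(tb)$ bits, and combining the per-bin subpads in order (each carrying its $\Theta(\log b)$-bit position address in the gap region) yields the $S_1$ subpad.

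The heart of the argument is Step~3 (Section~\ref{sec:step_3}), where I encode the remaining $\Theta(x - tb - t\log t)$ bits into the mix graph via a ``crazy mixing'' schedule in the style of~\cite{DDFIRSS07}. The key accounting is that each pair of adjacent stages offers $b^2$ independent edge choices, so a single mixing stage can be engineered to append $\Theta(b^2)$ chosen bits to a growing pad; iterating gives the $\BO{\frac{x - tb - t\log t}{b^2}}$ main term. Concretely, each target bin is addressed by a $\Theta(\log b)$-bit label, which is exactly why the pad is gap-$\Theta(\log b)$: the address fits beneath each bit, and the presence or absence of each of the $b^2$ possible edges writes the corresponding bit at the pad position dictated by its (source, target) address pair. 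I would then combine the three subpads into one pad and account for the additive $\BO{\frac{\log\log b}{\log t}}$ term, which I expect arises from building the $\Theta(\log b)$-length spacers and address blocks of the skeleton: a length-$\ell$ spacer assembled from $t$ tile types via a base-$t$ counter costs $\BO{\frac{\log \ell}{\log t}}$ stages, and $\ell = \Theta(\log b)$ gives precisely $\frac{\log\log b}{\log t}$.

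The main obstacle will be Step~3 together with the global unique-production requirement. Designing gadgets and a mixing order so that every one of the $2^{b^2}$ edge-subset choices in a stage translates into a \emph{distinct and correct} assignment of $\Theta(b^2)$ pad bits---while guaranteeing that each intermediate bin uniquely produces its terminal assemblies and no spurious combinations occur across the $\log b$-wide address channels---is where the real difficulty lies; the concatenation of the three subpads and the routine bookkeeping of glue strengths at $\tau = 2$ are comparatively secondary but must be checked to preserve unique production throughout.
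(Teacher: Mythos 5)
Your proposal follows essentially the same route as the paper's proof: the identical three-way split of $S$ into blocks of length $\Theta(tb)$, $\Theta(t\log t)$, and the remainder, built respectively from the tile-to-bin assignment (Lemma~\ref{lem:step_1}), the tile types via base conversion (Lemma~\ref{lem:step_2_2}), and ``crazy mixing'' of the mix graph (Lemma~\ref{lem:step_3}), then concatenated with length-$\Theta(\log b)$ connectors. The only cosmetic difference is that the paper realizes the $\Theta(\log b)$-bit ordering addresses as geometric \emph{wings} assembled by $\gamma$-ary parallel concatenation across bins (Lemma~\ref{lem:wings}), rather than your counter-style spacers, but both mechanisms account for the same additive $\BO{\frac{\log\log b}{\log t}}$ term.
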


\begin{proof}
Consider a bit string $S$ of length $x$.
The width-7 gap-$\Theta(\log{b})$ $x$-bit string pad representing $S$ to be constructed is referred to as the \emph{target bit string pad}, and the staged assembly system which assembles this pad is the \emph{target system}.
The techniques used in the target system require that the tile types and bins must be greater than a constant $c$.
The target system is designed by allotting a constant fraction of the system's total tile types and bins (i.e. $\frac{t}{c}$ and $\frac{b}{c}$ for some constant $c$) to each of four steps .
The first three steps each assemble a subpad of the target bit string pad, which are concatenated in a final step. The steps' subpads and stage complexities are as follows:
\begin{itemize}
\item Step~1: A width-7 gap-$\Theta(\log b)$ $\Theta(tb)$-bit string pad representing $\Theta(tb)$ bits of $S$. Uses $\BO{\frac{\log \log b}{\log t}}$ stages (see Lemma~\ref{lem:step_1}).
\item Step~2: A width-3 gap-$\Theta(\log b)$ $\Theta(t\log t)$-bit string pad representing $\Theta(t \log t)$ bits of $S$. Uses $\BO{\frac{\log \log b}{\log t}}$ stages (see Lemma~\ref{lem:step_2_2} and Section~\ref{sec:gapped_step_2}).
\item Step~3: A width-7 gap-$\Theta(\log b)$ $(x-\Theta(tb)-\Theta(t\log t))$-bit string pad representing the remaining $(x-\Theta(tb)-\Theta(t\log t))$ bits of $S$. Uses $\mathcal{O}(\frac{x-tb-t\log t}{b^2} + \frac{\log \log b}{\log t})$ stages (see Lemma~\ref{lem:step_3}).
\end{itemize}

The subpads constructed by these steps cannot simply be concatenated, as their adjacent bits at the point of concatenation would not have the same uniform gap as the other bits.
A length-$\Theta(\log b)$ \emph{connector} must be assembled to concatenate the subpads while inducing the appropriate gap.
Lemma~\ref{lem:wings}, although describing a system which assembles an assembly with more requirements than needed for this purpose, proves the existence of a system with $t$ tile types and $b$ bins which can assemble a length-$\Theta(\log b)$ assembly using $\mathcal{O}(\frac{\log\log b}{\log t})$ stages.
Such a system can trivially be modified to expose glues on east and west ends for use in concatenating the subpads.
In one stage, concatenate two bit string pads using a copy of this connector assembly, and in one more stage concatenate the third.
$\mathcal{O}(1)$ tile types, bins and stages can be used to ``fill in'' the portions of the assembly with width less than~7.

Steps $1$, $2$, and $4$ each use $\BO{\frac{\log \log b}{\log t}}$ stages, whereas Step $3$ uses $\BO{\frac{x-tb-t\log t}{b^2}+\frac{\log \log b}{\log t}}$ stages.
Then the total number of stages used by the target system is $\BO{\frac{x-tb-t\log t}{b^2}+\frac{\log \log b}{\log t}}$.
\end{proof}

The additive gap between the upper and lower bounds implied by these lemmas is due to the $\BO{\frac{\log \log{b}}{\log t}}$ additional stages used to assemble some of the machinery needed to carry out the three steps of Lemma~\ref{lem:xbit_string}.
We leave the removal of this gap as an open problem, stated in Section~\ref{sec:conclusion}.

\section{Bit String Pad Construction} \label{sec:bit_string_pad}

As sketched in Section~\ref{sec:key_lemmas}, we assemble bit string pads from three subpads constructed via distinct methods utilizing distinct sources of information complexity in staged self-assembly systems.
The construction is described in five parts constituting the remainder of this section.
The first two parts, Sections~\ref{sec:wings} and~\ref{sec:fattening}, describe helpful subconstructions used by the three subpad constructions that follow in Sections~\ref{sec:step_1} through~\ref{sec:step_3}.

\subsection{Wings} \label{sec:wings}

In this section, a $b$ bin system constructs all possible $\lfloor \log b \rfloor$-bit string pads, each contained in its own bin.
The purpose of these bit string pads is to mix them with $\BO{1}$ tile types to assemble \emph{wings}.
Wings are rectangular assemblies with geometric bumps, or \emph{teeth}, that encode a positive integer \emph{index} in binary.
A wing gadget has index $i$ and $m$ bits provided it geometrically encodes an $m$-bit binary string representing $i$.
Wings come in two varieties, \emph{west} and \emph{east}, shown in Figure~\ref{fig:wings}.

\begin{figure}[t]
    \centering
    \subcaptionbox{\label{fig:left_wing}}{\includegraphics[width=0.25\textwidth]{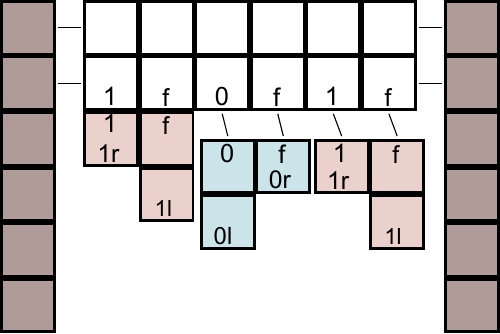}}
    \hspace*{.4cm}
    \subcaptionbox{\label{fig:east_wing}}{\includegraphics[width=0.25\textwidth]{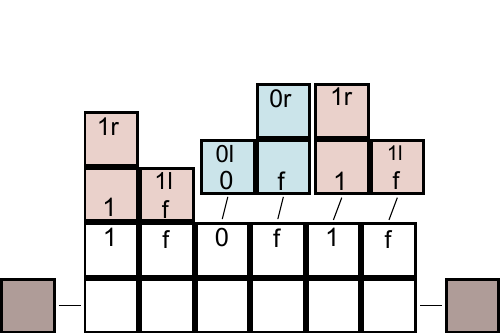}}
    \hspace*{.4cm}
    \subcaptionbox{\label{fig:bit_strip_wing}}{\includegraphics[width=0.42\textwidth]{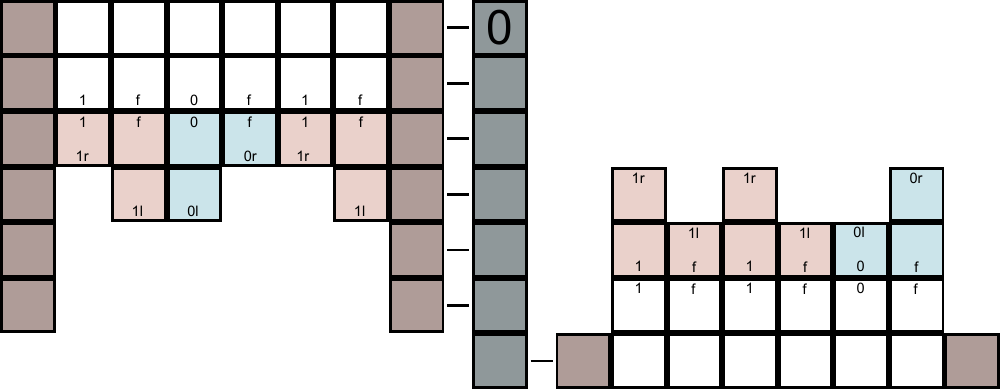}}
    \vspace*{\floatsep}
    \subcaptionbox{\label{fig:wing_example}}{\includegraphics[width=.8\textwidth]{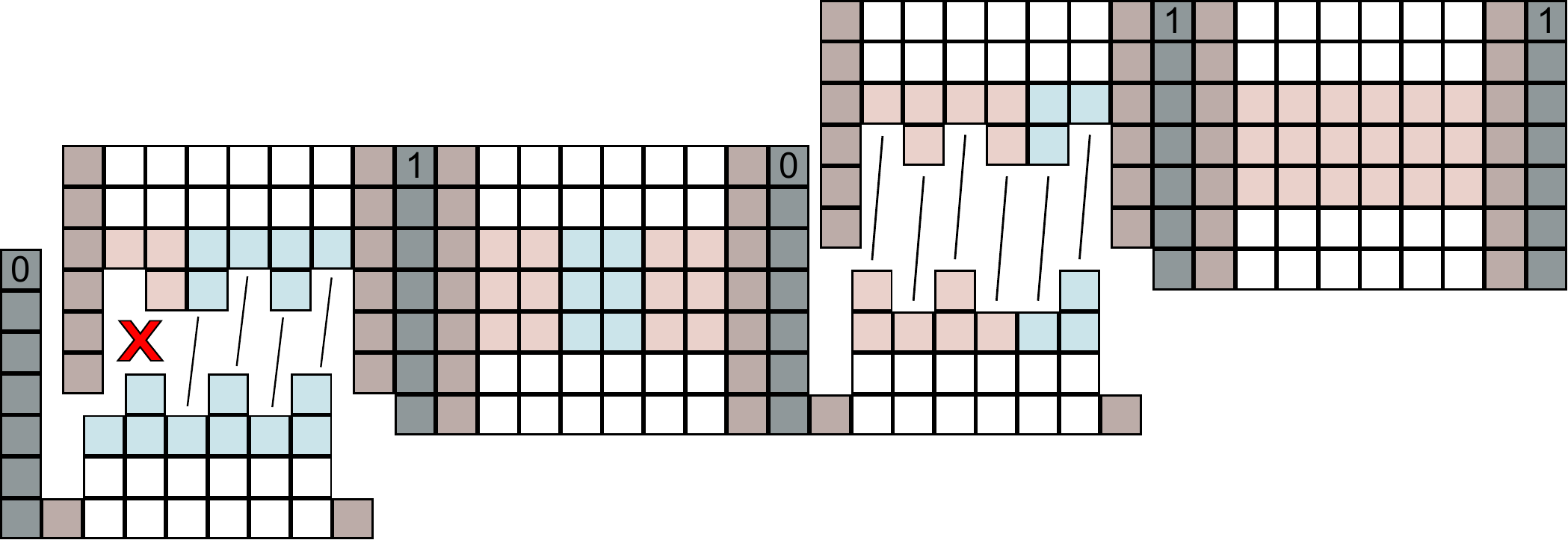}}
    \caption{Assembling wings from bit string pads and extra $\BO{1}$-size assemblies. (a) Assembly of a west wing. (b) Assembly of an east wing. (c) The attachment of chosen west and east wings to a $\BO{1}$-sized \emph{bit stick}. (d) Directing the assembly of bit sticks with attached wings. Left: mismatched wings prevented from attaching via matched glues due to mismatched geometry. Right: matching wings have no such geometric mismatch and so attach.}
\label{fig:wings}
\end{figure}

\begin{figure}[t]
   \centering
    \includegraphics[width=.65\textwidth]{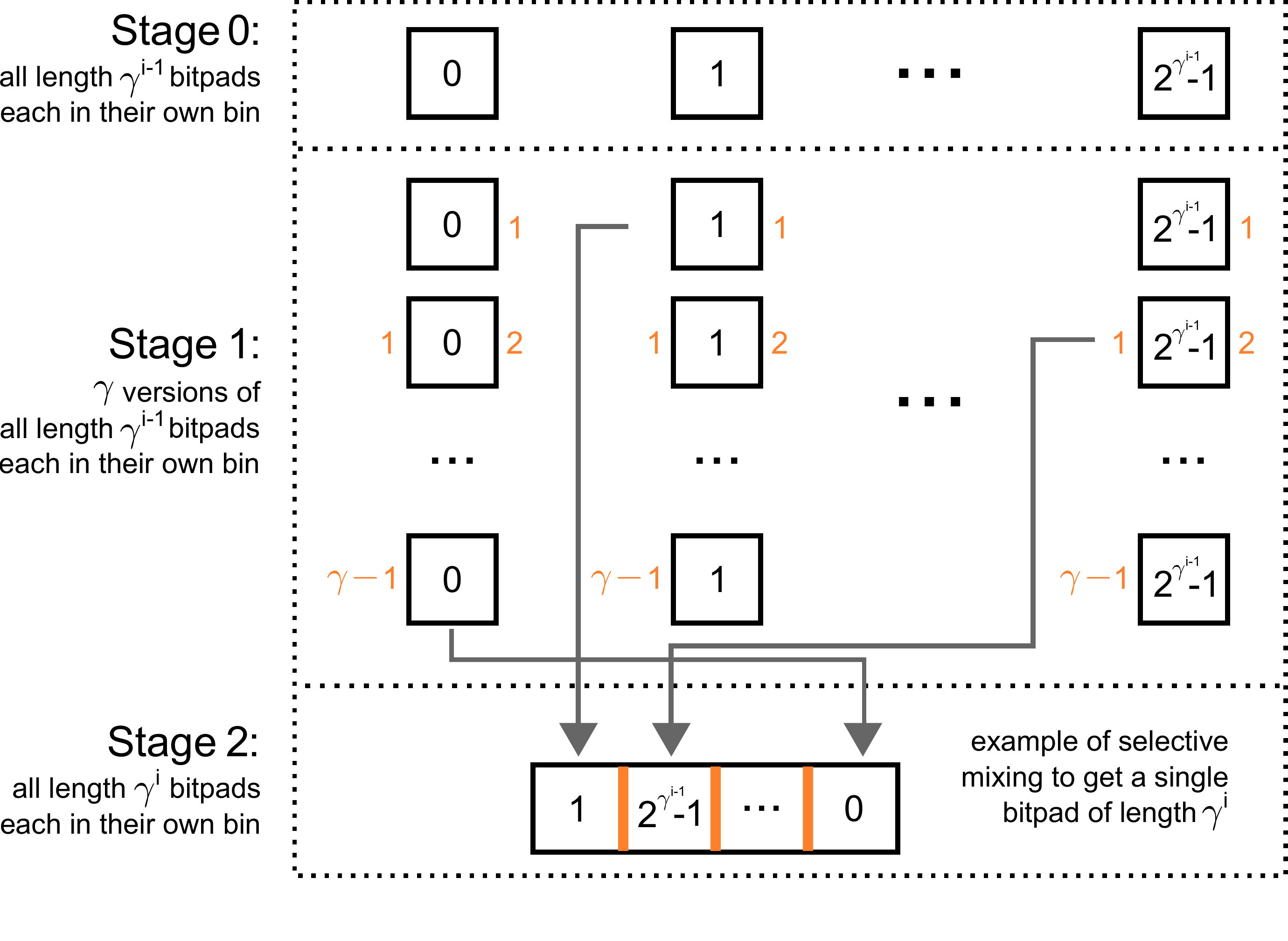}
    \caption{The 2-stage Round~$i$ used in Case~1 of Lemma~\ref{lem:wings}.} 
    \label{fig:wing_building}
\end{figure}

Wings encode their indices on either the north surface (west wings) or south surface (east wings).
Roughly speaking, a pair of west and east wings can attach if their teeth match (perfectly interlock), equivalent to having the same index.
In Steps~1 and~3, west and east wings are attached to the west and east ends of \emph{bit sticks}, which are $7\times1$ assemblies which expose a north-facing $g_0$ or $g_1$ glue.
The attached wings direct these bit sticks to assemble in a desired west-to-east order as seen in Figure~\ref{fig:wing_example}.
The bit string pads underlying the wings are constructed as follows:

\begin{lemma}
\label{lem:wings}
There exists a constant $c$ such that for any $b,t \in \N$ with $b,t > c$, there exists a $\tau = 2 $ staged assembly system with $b$ bins, $t$ tile types, and $\BO{\frac{\log{\log{b}}}{\log{t}}}$ stages whose uniquely produced output is all width-2 gap-1 $\lfloor\log(b)\rfloor$-bit string pads, each placed in a distinct bin.
\end{lemma}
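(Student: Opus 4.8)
The plan is to build the library of all $\lfloor\log b\rfloor$-bit pads by a recursion that, in each round, lengthens every pad by a multiplicative factor of $\Theta(t)$ rather than the factor of $2$ one gets from naive ``crazy mixing''~\cite{DDFIRSS07}. Since the target length is $\lfloor\log b\rfloor$ and each round takes a pad of length $\ell$ to one of length $\Theta(t)\cdot\ell$, a total of $\BO{\frac{\log\log b}{\log t}}$ rounds suffice, each costing $\BO{1}$ stages. I would split into two regimes matching the two cases. When $\log t \ge \log\log b$ (so the claimed bound is $\BO{1}$), a single round suffices: I would reserve $\Theta(\log b) \le \BO{t}$ distinct ``coupling'' glues, prepare each single-bit pad in the $\lfloor\log b\rfloor$ positional flavors those glues induce, and in one combining stage route, for each target string, the $\lfloor\log b\rfloor$ appropriately flavored bit pads into one bin so that the coupling glues force them to snap together from west to east into exactly that string. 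When $\log t < \log\log b$, I would iterate this concatenation $\BO{\frac{\log\log b}{\log t}}$ times, each round using $\Theta(t)$ coupling glues to fuse $\Theta(t)$ length-$\ell$ blocks into one length-$\Theta(t)\ell$ pad (with a smaller, possibly uneven, concatenation in the final round to land on length exactly $\lfloor\log b\rfloor$).

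The key gadget in a round is order-enforcing concatenation that increases neither the number of rows nor the gap. For a target whose base-$\Theta(t)$ decomposition is $s_1 s_2\cdots s_k$ (with $k=\Theta(t)$), I place into one bin the length-$\ell$ pads $P_{s_1},\dots,P_{s_k}$, each already carrying on its west and east edges the coupling glues appropriate to its intended position: block $j$ exposes a strength-$2$ coupling $a_j$ on its east edge and block $j{+}1$ exposes $a_j$ on its west edge, while the two outermost edges carry either ``cap'' glues (for a final output) or labels from a complementary glue set (to flavor the pad for the next round). Because each coupling label $a_j$ occurs on exactly one east edge and exactly one west edge, these $k$ blocks admit a unique $\tau=2$-stable saturated assembly, namely the in-order concatenation, so each bin uniquely produces a single pad; caps prevent runaway chaining, and the isolation of bins lets me reuse the same glue labels in every bin of a stage. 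The positional flavors consumed by a round are exactly those produced by the previous round, and the $\BO{1}$ base case builds the constantly many shortest pads directly, so each stage only ever holds the $\Theta(t)\cdot 2^{\ell}$ flavored blocks feeding it and the $2^{k\ell}\le b$ resulting pads; I would verify both stay within the $b$-bin budget, using that the largest intermediate length is $\Theta(\log b / t)$.

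The main obstacle is reconciling unique production with closure of the flavor set under the recursion. Unique production within a combining bin forces the two exposed outer edges of each block to carry labels disjoint from the internal couplings $a_j$ (otherwise an outer edge would spuriously bind to an internal east or west edge and realize a wrong ordering); but the outer edges of a round's output pad are precisely the edges that must become coupling edges when that pad is reused as a block one round later. I would resolve this tension by alternating between two disjoint coupling-glue sets $X$ and $Y$ on odd and even rounds, so that the internal couplings of a round are drawn from one set while the flavoring labels that will serve as the next round's internal couplings are drawn from the other. This makes the family of flavors produced by a round coincide with the family consumed by the following round while keeping the total number of glue (hence tile) types at $\BO{t}$. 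Verifying that this alternating scheme yields a flavor family genuinely closed under the recursion, together with the bin-budget and unique-production checks, is the technical heart of the argument; the stage count then follows immediately from the $\BO{1}$ cost per round and the $\BO{\frac{\log\log b}{\log t}}$ rounds.
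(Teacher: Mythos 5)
Your high-level plan coincides with the paper's: $\Theta(t)$-ary concatenation per round, positional strength-2 couplings that force west-to-east order, $\BO{1}$ stages per round, hence $\BO{\frac{\log\log b}{\log t}}$ stages, with a separate one-shot regime when $t$ is large relative to $\log\log b$ (your single-round case is exactly the paper's Case~3, where $\gamma \geq 2\log b$ and all $\lfloor\log b\rfloor$-bit pads are built in one combining stage from positionally flavored single-bit gadgets). The one genuine mechanical difference: you bake the coupling glues into the blocks' outer tile edges and must then alternate two disjoint glue sets across rounds to keep flavors closed under the recursion, which you correctly identify as the delicate point. The paper dissolves this tension instead of resolving it: each round spends one extra stage attaching small reusable \emph{connector} assemblies (with index glues, stored in $\gamma$ dedicated bins) to plain-ended pads, and the connectors at the outermost positions are simply omitted, so every round's output is again a plain-ended pad and the flavor family is trivially closed. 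This also avoids a fan-out your scheme incurs: your position-1 and position-$k$ blocks must carry, on their outer edges, whichever of the $\Theta(t)$ next-round coupling labels their output will eventually need, so flavors are really (west-label, east-label) pairs whose bookkeeping you defer to verification.

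There is, however, a concrete gap: the case where $\lfloor\log b\rfloor$ is not a power of $\gamma = \Theta(t)$, which you dispatch with ``a smaller, possibly uneven, concatenation in the final round.'' As stated this fails. After $r-1$ full rounds your library contains only pads of power-of-$\gamma$ lengths, so the residual block of length $\lfloor\log b\rfloor - m\gamma^{r-1}$ is generally unavailable; and if you instead decompose the target via its base-$\gamma$ expansion and concatenate all digit segments in one round, you need one distinct coupling label per junction, i.e., up to $\sum_j d_j = \Theta(t\log_t\log b)$ labels, exceeding the $\BO{t}$ glue budget whenever more than one round was needed. This is not a corner case --- it is the generic situation --- and the paper devotes its entire Case~2 to it: base-$\gamma$ digits of the target length are accumulated \emph{across} rounds using dedicated growth, output, and west/east incubator bins, with special index-0 connectors splicing each round's $d_i\gamma^{i-1}$-bit contribution onto the growing output pads in 3-stage rounds, so that per-round coupling usage stays at $\BO{\gamma}$. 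To repair your version you would need an analogous cross-round accumulation (or a top-down length schedule $L_{i-1} = \lceil L_i/\gamma\rceil$ that builds all pads of a cascade of non-power residual lengths alongside the full lengths, with its own bin-budget check); either way this is a substantive missing component, not a final-round footnote.
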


\begin{proof}
Let $\gamma = \lfloor \frac{t-4}{2} \rfloor + 1$.
The construction has three cases, depending upon the relative values of $\gamma$ (determined by $t$) and $b$.

\textbf{Case 1: $\gamma < 2 \log{b}$ and $b=\gamma^n$ for $n \in \mathbb{N}$.}
All three cases use width-2, gap-0 1-bit string pads called \emph{binary gadgets}: $\BO{1}$-sized assemblies representing~0 and~1.
The remaining tile types are used to assemble $2(\gamma - 1)$ \emph{connectors} that bind to the west and east ends of binary gadgets.
Connectors come in \emph{west} and \emph{east} varieties and connector has a positive integer \emph{index} such that two connectors can bind to one another if and only if their indices are equal.

For each $k \in \{1,2,\dots,\gamma-1\}$, assemble west and east connectors with index $k$.
Then repeat the following 2-stage ``round'' seen in Figure~\ref{fig:wing_building}.
The $i$th round begins with $2^{\gamma^{i-1}}$ binary gadgets, each in their own bin, and each of the west and east connectors stored in their own bins.
In the first stage of the round, mix each binary gadget with west and east connectors with indices $h-1$ and $h$, respectively, for all $h \in \{1,2,\dots,\gamma-1\}$,
In the special case that $h = 1$ or $h = \gamma-1$, omit the west or east connector, respectively.
The result is $\gamma2^{\gamma^i}$ distinct assemblies, with each assembly in its own bin.
The tile set for bit string pads and connectors can be seen in Figure~\ref{fig:wing_tiles}.

\begin{figure}[t]
    \centering
    \includegraphics[width=.8\textwidth]{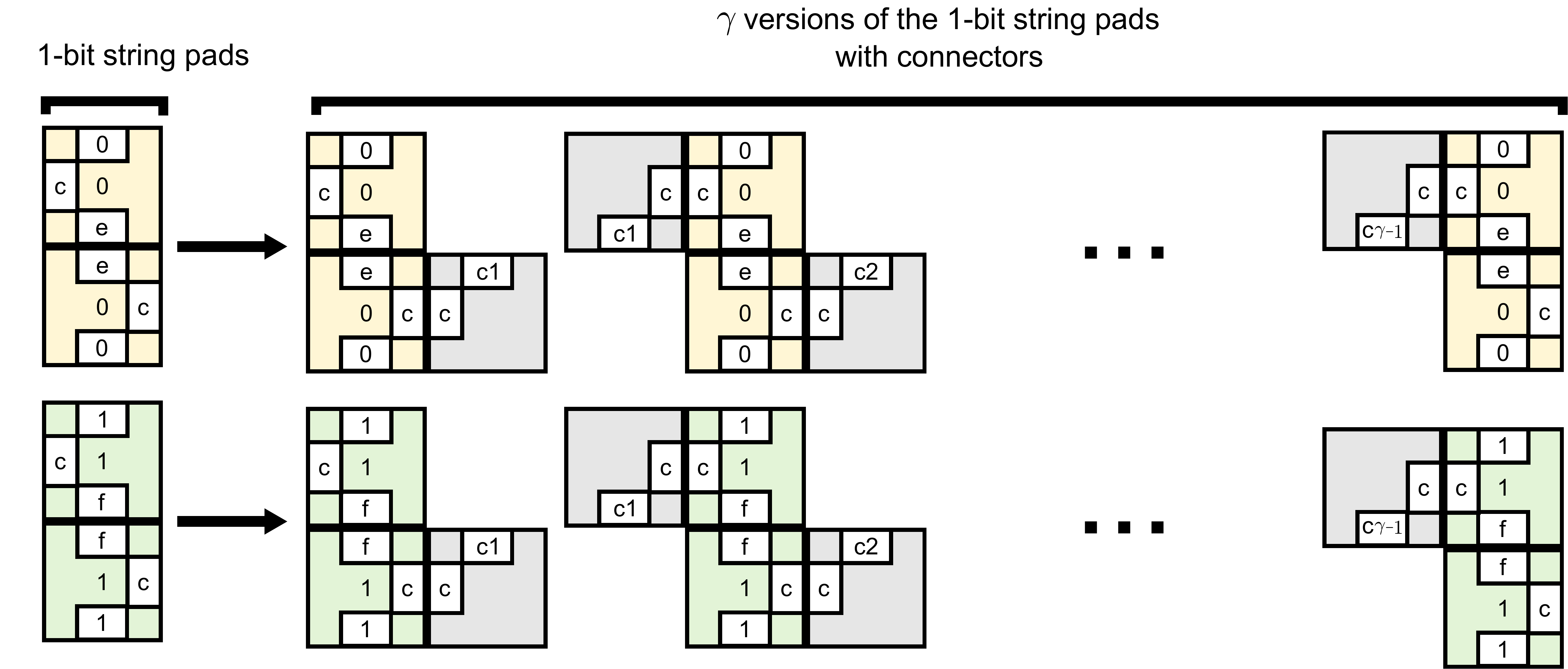}
    \caption{Left: 2~width-2 1-bit string pads. Right: $2\gamma$ versions of each bit string pad with pairs of west and east connectors attached, as done in the first stage of each round.}
    \label{fig:wing_tiles}
\end{figure}

In the second stage of the round, selectively mix the $\gamma$-size subsets of the $\gamma2^{\gamma^{i-1}}$ bins, choosing one binary gadget with each pair of connector indices, to assemble all width-2 $\gamma^i$-bit string pads in separate bins.
The number of bins used in round $i$ is thus $2^{\gamma^i} + \gamma$: enough for all $\gamma^i$-bit string pads and the $\gamma$ (reusable) connectors.

Perform sufficient rounds to assemble all $\lfloor \log(b) \rfloor$-bit string pads, i.e., the smallest integer $r$ such that $\gamma^r \geq \log(b)$ and thus $r = \lfloor \log_\gamma{\log(b)} \rfloor$.
In the last stage of round $r$, the number of bins used can be reduced to $2^{\gamma^r}$, dropping the $\gamma$ additional bins containing connectors.

The number of bins used then does not exceed $b$ since $2^{\gamma^i} + \gamma \leq 2^{\gamma^r} \leq b$ for all $i < r$.
Moreover, the number of stages used is $\BO{r} = \BO{\frac{\log\log{b}}{\log{\gamma}}} = \BO{\frac{{\log\log{b}}}{\log{t}}}$.
This construction requires $b,t > c$ for some constant $c$ large enough to ensure $\gamma \geq 1$ and $b$ can accommodate at least $1$ of each of the bin types.

\textbf{Case 2: $\gamma < 2 \log(b)$ and $b \neq \gamma^n$ for $n \in \mathbb{N}$.}
Note that this case is identical to Case~1, except that $b$ is not a power of $\gamma$.
Thus the desired length of the assembled bit string pads are between $\gamma^{r-1}$ and $\gamma^r$ for some $r$; as a solution, bit string pads are assembled from collections of shorter bit string pads of power-of-$\gamma$ lengths.

Let $d_{\lfloor \log_\gamma(b) \rfloor} d_{\lfloor \log_\gamma(b) \rfloor - 1} \dots d_2 d_1$ be the base-$\gamma$ expansion of $b$.
New \emph{special} connectors with index~0 are used, in addition to the \emph{standard} connectors from Case~1.
Each round also uses \emph{growth} bins, \emph{output} bins, and \emph{west} and \emph{east incubator} bins.
West and east incubator bins contain growing $\sum_{j=1}^i{d_j\gamma^{j-1}}$-bit string pads with special west or east connectors, respectively.

As in Case~1, use $\log_\gamma(b)$ rounds to assemble growing sets of longer bit string pads.
The $i$th round begins with all $\gamma^{i-1}$-bit string pads in separate growth bins.
In the first stage of the $i$th round, mix standard connectors with these pads to assemble $\gamma$ versions of all $\gamma^{i-1}$-bit string pads in separate growth bins.
Also, the output bins are equal to their predecessors in the previous stage.

In the second stage of the $i$th round, if $d_i > 0$, then selectively mix $\gamma$-sized subsets of bit string pads with standard connectors (stored in growth bins) to assemble all $d_i\gamma^{i-1}$-bit string pads in separate bins.
If $i = 1$ or the output bins are empty, these bit string pads are stored in their own output bins.
Otherwise, the output bins are not empty; in this case they are also mixed with west connectors and stored in west incubator bins.
In the same stage, mix every bit string pad from the output bins with a special east connector and store them in east incubator bins.

In the third stage, mix all pairs of west and east incubator bins into separate output bins, replacing any previous output bins.

\begin{figure}[t]
   \centering
    \includegraphics[width=.9\textwidth]{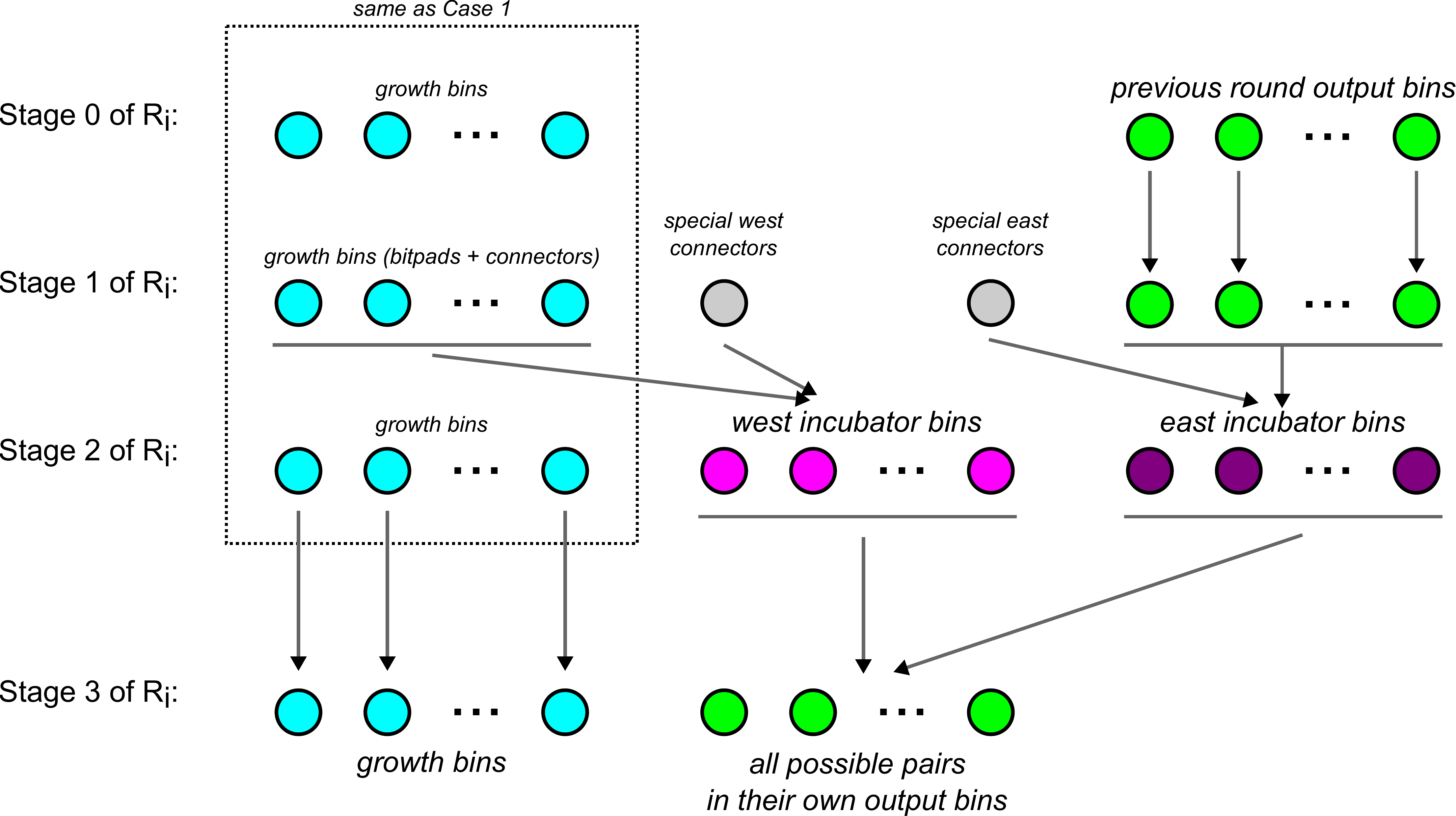}
    \caption{The 3-stage Round~$i$ used in Case~2 of Lemma~\ref{lem:wings}.}
    \label{fig:case_2_highlevel}
\end{figure}

For all but the last round, mix $\gamma$-sized subsets of bit string pads with standard connectors (stored in growth bins) to assemble all $\gamma^i$-bit string pads in separate growth bins, replacing the previous growth bins.
Since each round concatenates $d_i\gamma^{i-1}$ bits to the bit string pads in output bins, the bit string pads assembled in the output bins of the final round represent $\sum_{j=1}^{\lfloor \log_\gamma(b) \rfloor}{d_j\gamma^{j-1}} = \lfloor \log(b) \rfloor$ bits.
The high-level idea for the construction can be seen in Figure \ref{fig:case_2_highlevel}.

This construction requires $b, t > c$ for some constant $c$ large enough to ensure that $\gamma \geq 1$ and at least one of each bin type (growth, output, etc.) is available.
The number of stages remains $\BO{\frac{\log{\log{b}}}{\log{t}}}$.

\textbf{Case 3: $\gamma \geq 2 \log(b)$.}
Let $\beta = \lfloor \log(b) \rfloor$.
Begin with $\beta$ bins, each containing a binary gadget representing~0 with a distinct pair of west/east glues $g_1$ and $g_2$, $g_2$ and $g_3$, etc.
Similarly, create a second set of identical $\beta$ bins, but with binary gadgets representing~1.
In the second stage, combine every $\beta$-sized subset of bins that contain binary gadgets with distinct west/east glue pairs to assemble all $\beta$-bit string pads.
\end{proof}

\subsection{Fattening} \label{sec:fattening}

Roughly speaking, ``fattening'' replaces a single tile type with a set of~5 corresponding tile types and a generic \emph{filler} assembly of length~$k-2$ to yield a $2\times k$ assembly which exposes the same glues as the original tile (see Figure~\ref{fig:fattener}).
Fattening is used as a subroutine in Steps~1 and~2 to increase the gap of the subpads they assemble from 0 to $\Theta(\log{b})$, matching that of the subpad assembled by Step~3, and as a subroutine in Step~2 for assembling decompression pads.

A \emph{filler assembly} is an assembly of length $k-2$ used to fatten a tile type to some length $k$.
The technique in Section~\ref{sec:wings} for assembling wings can be used to assemble length-$\Theta(\log{b})$ filler assemblies.
Lemma~\ref{lem:wings} yields all length-$\log(b)$ bit string pads in $\BO{\frac{\log\log{b}}{\log{t}}}$ stages; small modifications yields a filler assembly of the same length in the same complexity.

\begin{figure}[t]
\centering
\subcaptionbox{}{\label{fig:fattener_tile}\includegraphics[width=0.12\textwidth]{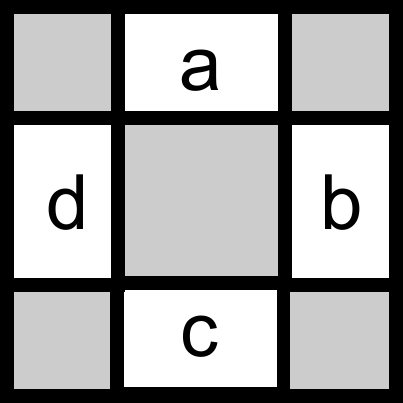}}
\hspace{1cm}
\subcaptionbox{}{\label{fig:fattener_mix}\includegraphics[width=0.6\textwidth]{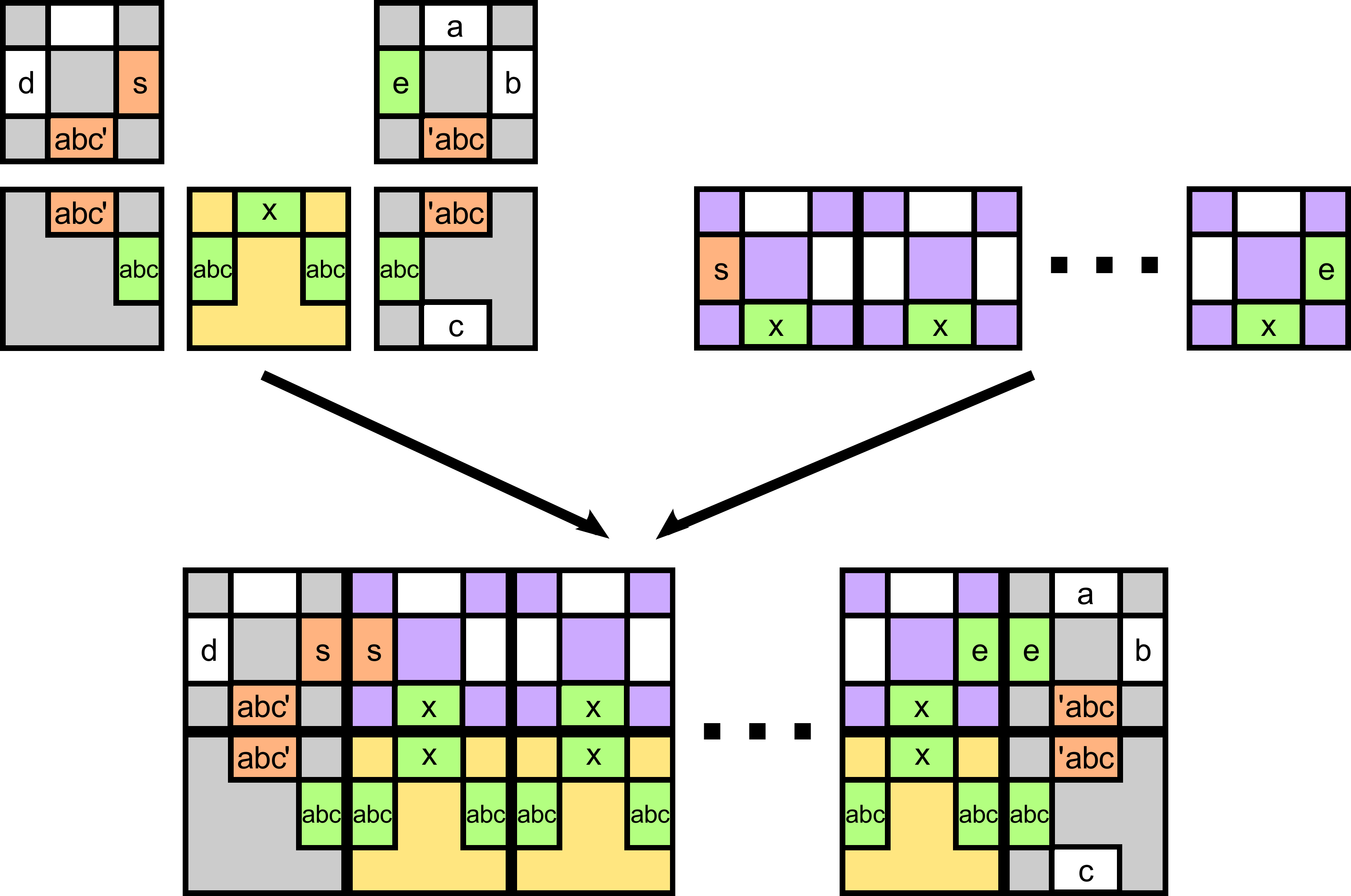}}
\caption{(a) The tile to be fattened. (b) 5 tile types, based on the tile to be fattened, are mixed with a length-$(k-2)$ filler assembly which ``fattens'' the assembly from width $1$ to width $k$. The four glues of the original are exposed in the same directions.}
\label{fig:fattener}
\vspace{-2mm}
\end{figure}

\subsection{Step 1: encoding via initial tile-to-bin assignment} \label{sec:step_1}
\begin{figure}[t]
    \centering
    \includegraphics[width=0.85\textwidth]{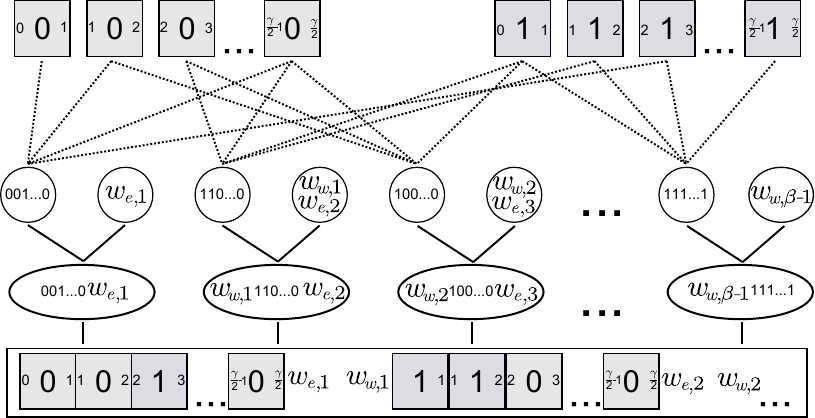}
    \caption{The assembly of a $\frac{\gamma\beta}{2}$-bit string pad. The squares labeled $0$ and $1$ represent bit sticks. The dotted lines indicate tile to bin assignments before the first stage of the system; $w_{e,i}$ and $w_{w,i}$ represent the $i$th east and west wings respectively.}
    \label{fig:step1_abstract}
\end{figure}

Recall that in a staged system, each of the system's $b$ stage-1 bins is assigned a subset of $t$ total tile types.
Here we design an assignment that assembles a $\Theta(tb)$-bit string subpad of the final bit string pad using $\BO{\log{\log{b}}/\log{t}}$ stages - dominated by the use of the wings subconstruction of Section~\ref{sec:wings}.
The assignment yields $\Theta(b)$ bins that contain subpads encoding distinct length-$\Theta(t)$ substrings of the $\Theta(tb)$ bits.
These assemblies are then combined (in the proper order) using wings.

\begin{lemma}
\label{lem:step_1}
There exists a constant $c$ such that for any $b,t \in \N$ with $b,t > c$, there exists an $x = \Theta(tb)$ such that for any bit string $S$ of length $x$, there exists a $\tau = 2$ staged assembly system with $b$ bins, $t$ tile types, and $\BO{\frac{\log\log b}{\log t}}$ stages whose uniquely produced output is a width-$7$ gap-$\Theta(\log b)$ $x$-bit string pad representing $S$.
\end{lemma}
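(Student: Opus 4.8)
The plan is to convert the $\Theta(tb)$ bits of freedom available in the stage-1 tile-to-bin assignment directly into $\Theta(tb)$ exposed north glues, by having each of $\Theta(b)$ bins independently assemble a length-$\Theta(t)$ subpad and then concatenating these subpads in the correct west-to-east order using the wings of Lemma~\ref{lem:wings} (as sketched in Figure~\ref{fig:step1_abstract}). First I would fix the target length $x = m \cdot \ell$, where $m = \Theta(t)$ is the number of bits encoded per bin and $\ell = \Theta(b)$ is the number of bins dedicated to this step; this is the promised $x = \Theta(tb)$, and the construction will encode an arbitrary $S$ of exactly this length. As in the global decomposition, only a constant fraction $t/c$ of tiles and $b/c$ of bins are used here.

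For the per-bin encoding I would reserve $\Theta(t)$ tile types as \emph{bit sticks}: for each position $i \in \{1,\dots,m\}$ create two $7\times 1$ tile types, one exposing a north $g_0$ glue and one a north $g_1$ glue, both carrying identical position-$i$ glues on their east and west faces so that each binds only to the position-$(i{-}1)$ and position-$(i{+}1)$ sticks. The stage-1 assignment then places in the $j$th bin exactly the bit stick (0 or 1) dictated by the $j$th length-$m$ block of $S$ at each position. Since the position glues force a unique linear order and each position receives exactly one stick, the bin uniquely produces the length-$m$ subpad whose north glues spell that block; distinct blocks yield distinct subpads, and across the $\ell$ bins this realizes all $\Theta(tb)$ bits of $S$.

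To concatenate the $\ell$ subpads in order I would invoke Lemma~\ref{lem:wings} to build all $\lfloor\log b\rfloor$-bit string pads in $\BO{\frac{\log\log b}{\log t}}$ stages, yielding enough distinct indices to label all $\ell = \Theta(b)$ subpads, and with $\BO{1}$ extra tile types turn these into west and east \emph{wings}. Attaching (via the mix graph) an east wing of index $j$ to the $j$th subpad and a west wing of index $j$ to the $(j{+}1)$st ensures that subpad $j$ can only interlock with subpad $j{+}1$; mixing all wing-decorated subpads into a single bin then uniquely produces the fully ordered pad in $\BO{1}$ stages. Finally, to meet the width-$7$ gap-$\Theta(\log b)$ requirement I would apply the fattening subroutine of Section~\ref{sec:fattening}, using length-$\Theta(\log b)$ filler assemblies (themselves built by the wings technique in $\BO{\frac{\log\log b}{\log t}}$ stages) to space the bits to the target gap, with $\BO{1}$ further tile types filling the result to uniform width $7$.

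The dominant cost is the $\BO{\frac{\log\log b}{\log t}}$ stages spent building the wings and filler assemblies; every other phase (per-bin subpad formation, wing attachment, concatenation, fattening) uses $\BO{1}$ stages, giving the claimed total. The main obstacle I anticipate is establishing \emph{unique production} rather than stage counting: I must argue that within each stage-1 bin the chosen bit sticks assemble into exactly one terminal subpad (no truncated or misordered products), and that once wings are attached the only $\tau$-stable combinations are the intended consecutive pairs. That is, I must show that matching glue strengths together with the interlocking-versus-colliding tooth geometry of Figure~\ref{fig:wings} rule out every spurious attachment, so that the single bin holding all decorated subpads has a unique terminal assembly equal to the target width-$7$ gap-$\Theta(\log b)$ $x$-bit string pad.
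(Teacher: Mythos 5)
Your proposal matches the paper's proof of Lemma~\ref{lem:step_1} essentially step for step: positional bit sticks whose $0$/$1$ variants are selected by the stage-1 tile-to-bin assignment ($\Theta(t)$ bits per bin across $\Theta(b)$ bins), wings from Lemma~\ref{lem:wings} to enforce the west-to-east ordering of the subpads, and fattening with length-$\Theta(\log b)$ fillers to achieve the uniform gap, with the stage count dominated by the $\BO{\frac{\log\log b}{\log t}}$ wing/filler construction, exactly as in the paper. One presentational caveat: fattening must be applied to the bit-stick tile types \emph{before} the subpads assemble (as the paper does, so that each bit is already spaced to the wing length $g = \Theta(\log b)$ when the sticks concatenate), since a gap-$0$ pad cannot be fattened after assembly --- but this is an ordering fix, not a change of approach.
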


\begin{proof}
See Figure~\ref{fig:step1_abstract} for a sketch of the idea.
Let $\gamma$ and $\beta$ be constant fractions of $t$ and $b$, respectively, and $b, t$ large enough such that $\gamma, \beta \geq 1$.
Use~$\gamma$ tiles and~$\beta$ bins to construct all west and east $\log(\beta)$-bit wings according to Section~\ref{sec:wings}.
Also construct $\frac{\gamma}{2}$ distinct \emph{bit sticks}.
Bit sticks are width-$7$ gap-$0$ $1$-bit string pads, i.e. $7 \times 1$ assemblies that expose $g_1$ or $g_0$ glues to the north.
Additionally, the bit sticks' east and west sides expose glues such that any $\frac{\gamma}{2}$-bit string pad can be assembled from $\frac{\gamma}{2}$ bit sticks attached sequentially.

Each bit string pad constructed this way must also be \emph{fattened} using the technique described in Section~\ref{sec:fattening}.
Using $\BO{\frac{\log \log \beta}{\log \gamma}}$ stages, each opads is fattened to length $g$ where $g$ is the length of a wing.
The purpose of fattening is to ensure the complete bit string pad constructed has uniform gap $g$.

In each of~$\beta$ bins, assemble $\frac{\gamma}{2}$ bit sticks into a distinct $\frac{\gamma}{2}$-bit string subpad of the desired pad.
Combine each of these~$\beta$ length-$\frac{\gamma}{2}$ subpads with wings that encode their sequential ordering in the pad (i.e. the wings are used to ensure that the subpads attach in the desired order).
Then, assemble these ``wing-labeled'' subpads into the complete $\frac{\gamma\beta}{2} = \Theta(tb)$-bit string pad.
The resultant bit string pad will have gap-$g$, since each bit in each subpad has been fattened to length $g$ and wings of length $g$ are used to concatenate each subpad.

This gap satisfies $g = \Theta(\log{b})$, since the wings used are 1-gapped $\log\beta$-bit string pads with $\BO{1}$ additional length added (as seen in Figure~\ref{fig:wings}).
Particularly, $g = 2(\log \beta) + 2$.
The number of stages used is $\BO{\frac{\log\log \beta}{\log \gamma}} = \BO{\frac{\log\log b}{\log t}}$ (for the wings, see Lemma~\ref{lem:wings}) plus $\BO{1}$ (the subpads of the desired pad).
\end{proof}

\subsection{Step 2: encoding via tile types} \label{sec:step_2}
Here the goal is to design a collection of $t$ tile types that assembles a target string of~$\Theta(t\log{t})$ bits in bit string pad form.
The solution is to utilize a common base conversion approach in tile assembly used by~\cite{AdChGoHu01,AGKS05g,FMS2015OPS,SolWin07}.
In this approach, tile types optimally encode integer values in a large base and are then ``decompressed'' into a binary representation.
In total, $t$ tile types are used to encode a value in a high base and decompress this representation into a string of $\Theta(t\log{t})$ bits.

\begin{definition}[Decompression pad]
For $k, d, x \in \mathbb{N}$ and $u=2^x$, a width-$k$ $d$-digit base-$u$ decompression pad is a $k\times dx$ rectangular assembly with $d$ glues from a set of $u-1$ glue types $\{g_0 , g_1,..., g_{u-1}\}$ exposed on the north face of the rectangle at intervals of length $x-1$ and starting from the westmost northern edge.
All remaining glues on the north surface have a common type $g_N$.
The remaining exposed south, east, and west tile edges have glues $g_S$, $g_E$, and $g_W$.
The exposed glues on the northern edge, disregarding glues of type $g_N$, $g_{a_1}, g_{a_2},\dots, g_{a_d}$ represent string $a_1, a_2,\dots, a_d$.
\end{definition}

The base conversion approach is illustrated by an example: assembling a width-3 9-bit string pad representing $S=010100000$ ($S=240$ in base~8).
First, a decompression pad representing~$S$ in base~8 by combining~3 different $3 \times \log_2(8)$ blocks is assembled (see Figure~\ref{fig:mt_pad_building}).
Then, this decompression pad is combined with $\BO{u}$ tile types to ``decompress'' the 3-digit base-8 representation of $S$ into an 8-bit base-2 representation (see Figure~\ref{fig:decompression_example}).

The remainder of this section consists of three lemmas.
The first, Lemma~\ref{lem:decompression_pad}, establishes that decompression pads representing $d$-digit base-$u$ strings can be assembled with $\BO{d + \log(u)}$ tile types.
The second, Lemma~\ref{lem:step_2_1}, describes the decompression tile types used to ``decompress'' the high-base representation on the decompression pad.
The third, Lemma~\ref{lem:step_2_2}, optimizes the choice of base used to minimize the number of tile types used in total for both the decompression pad and decompression tiles. 

The resulting bit string pads are gap-0; the follow-up Section~\ref{sec:gapped_step_2} describes how to increase the gap of these bit string pads to match those assembled in Steps~1 and~3 by applying fattening to a subset of the tile types used.

\begin{lemma}
\label{lem:decompression_pad}
Given integers $x \ge 3$, $d\ge 1$ and $u=2^x$, there exists a 1-stage, 1-bin staged $\tau = 2$ self-assembly system whose uniquely produced output is a $d$-digit decompression pad of width-2 and base-$u$, using at most $5d+\log(u)-2$ tile types.
\end{lemma}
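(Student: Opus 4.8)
The plan is to build the pad as a horizontal concatenation of $d$ blocks, each a $2 \times x$ rectangle (recall $u = 2^x$ and the width is $2$), where block $i$ exposes the single north glue $g_{a_i}$ for the $i$-th digit and the glue $g_N$ on its remaining $x-1$ northern edges. The essential idea is to separate the \emph{variable} information from the \emph{structural} scaffolding: the only data that differs between digits is the value $a_i$, which I encode in a single hardcoded ``digit tile'' whose exposed north glue is $g_{a_i}$ (one of the $u$ glue types $g_0,\dots,g_{u-1}$); every other tile is purely structural and is drawn from a pool that is shared across all blocks. Since the $d$ digit tiles each carry $\log u = x$ bits in their north glue, they account for the full $dx$ bits of information the pad must encode, so the remaining tiles need only provide geometry and ordering, not information.

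First I would fix the internal anatomy of a generic block. The bottom row grows west-to-east via strength-$2$ glues at $\tau=2$, the south edges all carry $g_S$, and the top row assembles cooperatively on top of it; the top-westmost tile of each block is the digit tile exposing $g_{a_i}$, while the remaining top tiles expose $g_N$. The westmost-bottom tile of block $i$ is made position-specific, carrying a west glue $R_{i-1}$ (with $R_0 = g_W$ for the leftmost block), so block $i$ can only begin once block $i-1$ has terminated; symmetrically its eastmost tile emits the position-specific glue $R_i$, and the eastmost edge of block $d$ carries $g_E$. The $x-2$ interior columns are filled by a shared strength-$2$ chain of filler tiles of fixed length, forcing each block to have length exactly $x$ and therefore placing the next digit glue at the required interval of $x-1$ from the previous one. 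Counting: at most $5$ block-specific tile types (the digit tile together with a constant number of anchor/boundary tiles carrying the $R_{i-1}, R_i$ glues) for each of the $d$ blocks, plus the $x-2$ shared interior filler types, yields the claimed bound of $5d + \log(u) - 2$ tile types in a single bin assembled in one stage.

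The main obstacle — and the technical heart of the argument — is establishing \emph{unique production}, i.e.\ that the target pad is the unique terminal assembly rather than merely a producible one. Two failure modes must be ruled out: the shared filler tiles must not nucleate or extend spurious stand-alone assemblies, and the $d$ blocks must concatenate in exactly the order $1,2,\dots,d$ despite sharing interior tile types. I would handle the first by requiring every filler and top-row tile to attach only by cooperation (two strength-$1$ glues summing to $\tau=2$), so that no such tile can initiate growth on its own and each can bind only in its intended grid location. The ordering is enforced by making the inter-block boundary glues $R_i$ position-specific and carried by the $\BO{1}$ anchor tiles of each block, so that the west glue of block $i+1$ matches only the east glue of block $i$; the shared interior then fills each block deterministically between its two position-specific endpoints. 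The crux of the write-up is verifying that this assignment of glue strengths and cooperative bindings admits no misordered, partial, or over-grown terminal assembly, which is exactly the uniqueness invariant one must check block by block.

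Finally I would confirm the boundary conditions demanded by the decompression-pad definition — $g_S$ on all south edges, $g_W$ and $g_E$ on the extreme west and east edges, $g_N$ on every non-digit north edge, and $g_{a_i}$ at the designated digit positions — and check that the resulting $2 \times dx$ configuration is $\tau$-stable, so that it is both producible and terminal. I expect the geometry and the tile count to be routine once the block anatomy is fixed; it is the uniqueness-with-shared-filler argument that will require the most care.
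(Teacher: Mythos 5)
Your construction is essentially the paper's own proof: the paper likewise decomposes the pad into $d$ per-digit width-2 length-$\log(u)$ blocks, each using $5$ block-specific tile types together with a shared filler of length $\log(u)-2$ built from $\log(u)-2$ shared tile types, concatenated in order to give the full pad with exactly $5d+\log(u)-2$ tile types in one bin and one stage. The only difference is presentational: you spell out the cooperative-attachment and position-specific boundary-glue details of the unique-production argument, which the paper delegates to its figure.
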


\begin{proof}
Consider a number $S = s_0 s_1 \dots s_{d-1}$ in a base $u$.
The goal is to build a \emph{decompression pad} with north glues representing the $d$ digits of $S$, and $\log(u)-1$ spacing between consecutive digits exposing north glues $g_B$.
As depicted in Figure~\ref{fig:mt_pad_building}, a width-2 length-$\log(u)$ decompression pad is built for each of the $d$ digits in the base-$u$ representation of $S$.
These $d$ length-$\log(u)$ decompression pads then assemble into the complete length-$d\log(u)$ decompression pad representing $S$. 

\begin{figure}[t]
    \centering
    \includegraphics[width=0.49\textwidth]{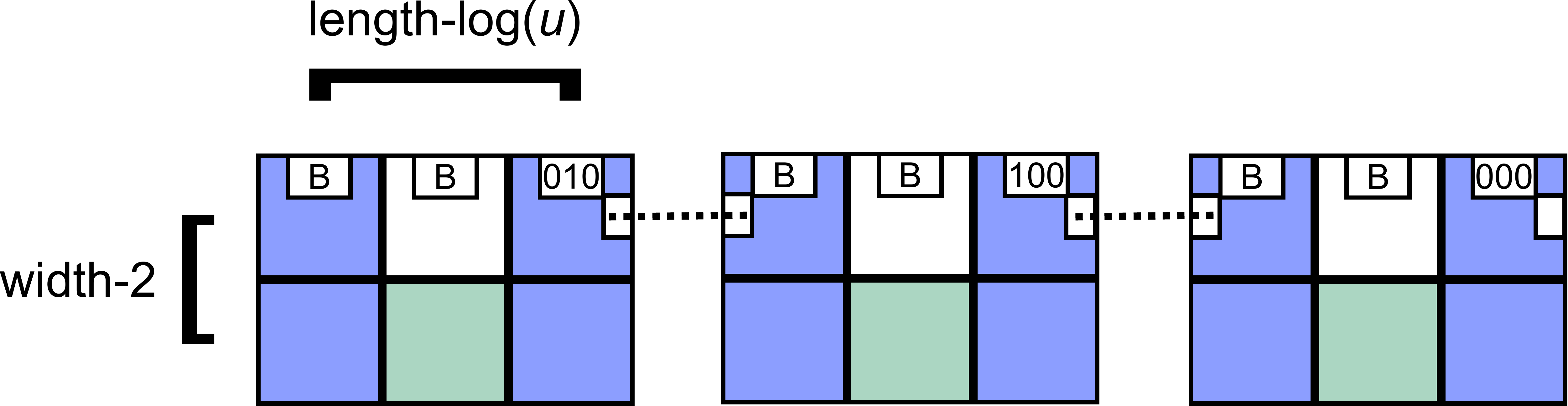}
    \includegraphics[width=0.49\textwidth]{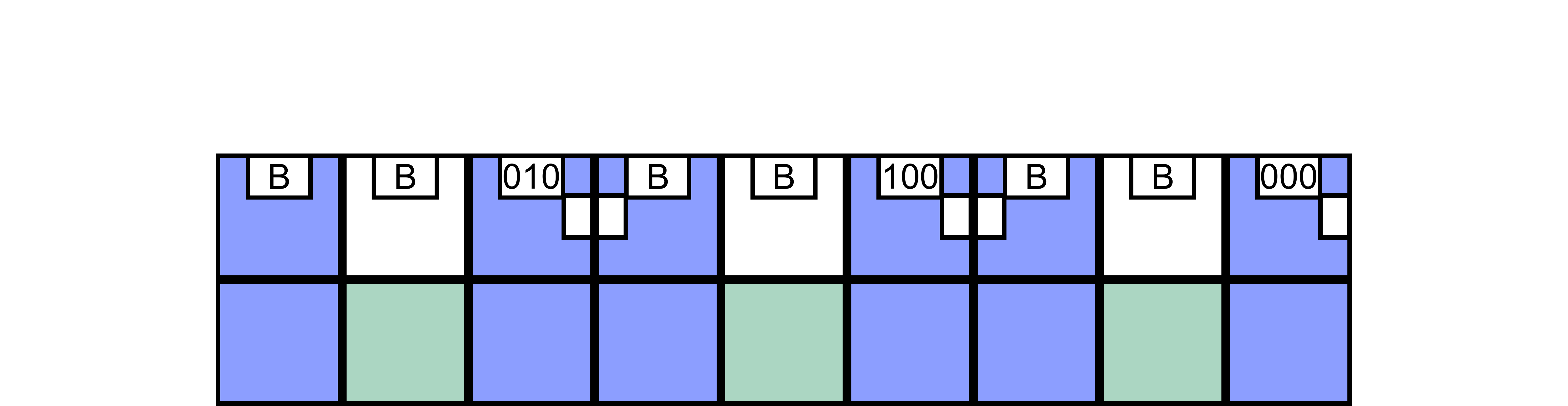}
    \caption{Assembling a decompression pad representing a bit string $S=010\,100\,000$ in base $u = 8$. 
Left: three width-2 length-$\log(u)$ decompression pads are designed for each digit in the base-$u$ representation of $S$. 
Right: the decompression pad assembled into a width-2 length-$|S|$ decompression pad representing $S$ in base $u$.}
    \label{fig:mt_pad_building}
\end{figure}

The decompression pads for each digit require~5 unique tile types along with a \emph{shared filler} of length $\log(u)-2$ built with $\log(u)-2$ tile types. The shared filler is an optimization that allows a portion of each length-$\log(u)$ decompression pad to be assembled with some shared tile types.
Therefore, the total tile complexity is $5d + \log(u)-2$.
\end{proof}

\begin{lemma}
\label{lem:step_2_1}
Given integers $d\ge 3$, $x \ge 3$, $u=2^x$, and a $d\log(u)$-bit number $S$, there exists a $\tau = 2$ staged assembly system with 1~bin, $5d+2u+\log(u)-4$ tile types, and 1~stage that uniquely produces a width-3 gap-0 $d\log(u)$-bit string pad representing $S$.
\end{lemma}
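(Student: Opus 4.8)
The plan is to build the pad in two layers: first lay down a compact base-$u$ encoding of $S$ as a decompression pad, then grow a single extra row of ``decompressor'' tiles that unpacks each base-$u$ digit into its $\log(u)=x$ binary digits, widening the assembly from width~2 to width~3. Concretely, I would first invoke Lemma~\ref{lem:decompression_pad} to obtain, in one bin and one stage, the width-2 $d$-digit base-$u$ decompression pad representing $S$, whose north face exposes the digit glues $g_{a_0},g_{a_1},\dots,g_{a_{d-1}}$ at positions $0,x,2x,\dots$ and the spacer glue $g_N$ everywhere in between; this costs $5d+\log(u)-2$ tile types. I would choose the digit string $a_0a_1\cdots a_{d-1}$ so that the concatenation of the binary expansions of the digits equals the target string $S$.

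The heart of the construction is a set of $\BO{u}$ decompressor tiles forming the third (north) row. I would model the decompressor as a finite-state ``shift register.'' For each digit value $a$, a \emph{start} tile binds with strength~2 to the digit glue $g_a$ on its south edge, exposes one bit of $a$ on its north face, and forwards the shifted remainder (e.g.\ $\lfloor a/2\rfloor$, or $2a \bmod u$, depending on the chosen bit order) eastward with a strength-1 glue. A \emph{continue} tile then attaches cooperatively at $\tau=2$, using its strength-1 west glue (carrying the current remainder) together with a strength-1 south glue matching the spacer $g_N$; each continue tile likewise exposes the next bit north and forwards the updated remainder east. Indexing these tiles by the at most $u$ reachable remainder values, distinguishing the start case from the continue case, and folding in the boundary cases yields exactly $2u-2$ tile types.

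The step I expect to be the main obstacle is establishing \emph{unique production} at $\tau=2$ together with correct bit alignment and termination. The observations I would make precise are: (i) the decompressor row can only advance over spacer positions, because at every position of the form $jx$ the south glue is a digit glue rather than $g_N$, so a continue tile is never $\tau=2$-stable there and cannot attach; this forces each digit's unpacking to halt after exactly $x$ tiles and prevents one digit's decompression from invading the next digit's region. (ii) Each position receives a unique tile, since the remainder forwarded east is determined and the north bit is a deterministic function of the remainder, so no spurious or competing assemblies form and the unique terminal assembly is the claimed width-3 gap-0 $d\log(u)$-bit string pad representing $S$. I would also check that the westmost digit (with no west neighbor) is handled by the strength-2 south binding of the start tile, and that the $g_N$ spacer north glues are exactly the positions overwritten by exposed bits, so the final north face is gap-0.

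Finally, summing the tile counts gives $(5d+\log(u)-2)+(2u-2)=5d+2u+\log(u)-4$ tile types, with all tiles combined in a single bin over one stage, as claimed.
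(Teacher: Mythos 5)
Your proposal is correct and takes essentially the same approach as the paper's proof: invoke Lemma~\ref{lem:decompression_pad} to get the width-2 $d$-digit base-$u$ decompression pad ($5d+\log(u)-2$ tile types), then add a single north row of $2u-2$ decompression tiles that each expose one bit and pass the shifted remainder of the digit along via east/west glues, yielding the width-3 gap-0 pad with $5d+2u+\log(u)-4$ tile types in one bin and one stage. Your explicit $\tau=2$ arguments (digit glues blocking continue tiles so each digit's unpacking halts after exactly $x$ attachments, cooperative binding over the spacer glues, strength-2 start tiles) merely make precise the details the paper delegates to its figures.
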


\begin{proof}
Since $S$ is a $d\log(u)$-bit number, it can be represented as a $d$-digit number in base $u$. 
Use Lemma~\ref{lem:decompression_pad} to obtain a tile set that assembles into a decompression pad that represents $S$ in base $u$ (requiring $d$ digits and thus $5d+\log(u)-2$ tile types).

\begin{figure}[t]
    \centering
    \includegraphics[width=.85\textwidth]{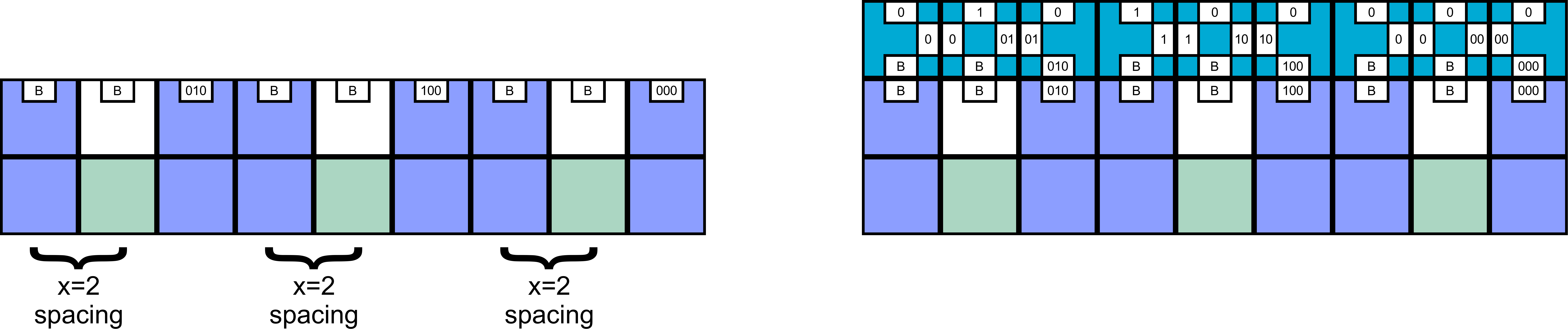}
    \caption{
Left: a width-2 gap-$(\log(u)-1)$ decompression pad representing a bit string $S=010100000$ in base $u = 8$.
Right: $\BO{u}$ decompression tiles interact with the north glues of the decompression pad to combine into a width-3 bit string pad representing $S$ in base~2.}
    \label{fig:decompression_example}
\end{figure}

Use an additional $2u-2$ tile types ``decompress'' the high-base representation of $S$ by exposing a single bit via north glues and passing along the remaining bits of the digit via east and west glues (see Figure~\ref{fig:mt_decompression_tiles}). 

\begin{figure}[t]
    \centering
    \includegraphics[width=1\textwidth]{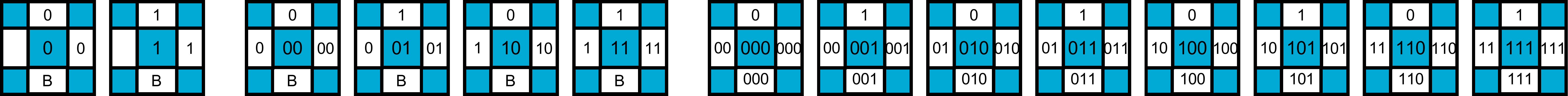}
    \caption{The set of tiles used to decompress a decompression pad representing a string in base~8. Since $u=8$, $2u-2 = 14$ tile types are used. These tiles attach to the exposed glues on a decompression pad, replacing a large-gap high-base representation of $S$ with a 0-gap low-base representation.}
    \label{fig:mt_decompression_tiles}
\end{figure}

In total, $\log(u)$ attachments are used to unpack the base-$u$ digit into a sequence of $\log(u)$ bits. 
The total tile complexity is $t = 5d+2u+\log(u)-4$: $5d+\log(u)-2$ types to build the decompression pad and $2u-2$ decompression tile types.
\end{proof}

\begin{lemma}
\label{lem:step_2_2}
There exists a constant c such that for any $t \in \mathbb{N}$ with $t > c$, there exists an $x=\theta(t\log t)$ such that for any bit string $S$ of length $x$, there exists a $\tau = 2$ staged assembly system with 1~bin, $t$~tile types, and 1~stage that uniquely assembles a width-$3$ gap-$0$ $x$-bit string pad representing $S$.
\end{lemma}

\begin{proof}
Here, the large base $u$ used Lemma~\ref{lem:step_2_1} is optimized to maximize the number of bits encoded by the construction.
Recall that Lemma~\ref{lem:step_2_1} first constructs length-$\log(u)$ decompression pads for every digit in a base-$u$ string.
Decompression pads are constructed using~5 tile types plus some additional number of shared ``spacing'' tile types that determine the gap length of the decompression pad.
Then ignoring spacing tile types, $\lfloor \frac{t}{2} \rfloor$ tile types assemble $\lfloor \frac{t}{10} \rfloor$ decompression pads and so $\lfloor \frac{t}{10} \rfloor$ base-$u$ digits can be represented.

Next, Lemma~\ref{lem:step_2_1} uses $2u + \log(u)$ tile types for ``decompressing'' the decompression pads and shared spacing.
Since $\lfloor \frac{t}{2} \rfloor$ tile types were used to assemble decompression pads, only $\lfloor \frac{t}{2} \rfloor$ available tile types remain and thus $2u + \log(u) \leq \lfloor \frac{t}{2} \rfloor$.

The number of bits $x$ encoded by the assembled bit string pad is $x = d\log(u)$.
The choice of $u = 2^{\lfloor \log{\frac{t}{3}} \rfloor}$ satisfies the aforementioned bound on tile types while (asymptotically) maximizing $u$. 
The result is a bit string pad with $x = d\log(u) = \left\lfloor \frac{t}{10} \right\rfloor \log{2^{\lfloor \log {\frac{t}{3}} \rfloor}} = \left \lfloor \frac{t}{10} \right\rfloor \left\lfloor \log{\frac{t}{3}} \right\rfloor = \Theta{(t \log t)}$ bits.

Due to constraints building the decompression pad, the high-base string must have at least~3 digits and base at least~8.
For building the decompression pad, minus the shared spacing, at least $\lfloor \frac{t}{2} \rfloor \ge 5d=15$ tile types are needed.
For the decompression tiles and shared spacing, at least $\lfloor \frac{t}{2} \rfloor \ge 2u +\log(u)-4 = 15$ tile types are needed.
For base~8, $u=2^{\lfloor \log{\frac{t}{3}} \rfloor}=8$. 
So this optimization requires $\frac{t}{2} \geq 15$ and $\frac{t}{3}\ge 8$ and thus $t \ge c=30$.
\end{proof}

\subsubsection{Gapped bit string pads}\label{sec:gapped_step_2}
The subpads assembled by Section~\ref{sec:step_1} and~\ref{sec:step_3} are gap-$\Theta(\log{b})$, in contrast with the gap-0 subpad assembled here.
As a solution, we increase the gap of these subpads from~0 to $\Theta(\log{b})$ using a modified version of the fattening process described in Section~\ref{sec:fattening}.

Let $q = \Theta(\log{b})$ be the length of the desired gap.
In Lemma~\ref{lem:step_2_2}, width-2 decompression pads representing strings of base-$u$ digits are assembled.
To increase the gap, a portion of the tile types used to assemble these decompression pads are replaced with fattened versions to space out the $g_B$ glues with $g_f$ glues, resulting in a width-3 decompression pad. Figure~\ref{fig:fattening_decompression_pad} shows an example of the modification.

\begin{figure}[ht]
    \centering
    \includegraphics[width=.7\textwidth]{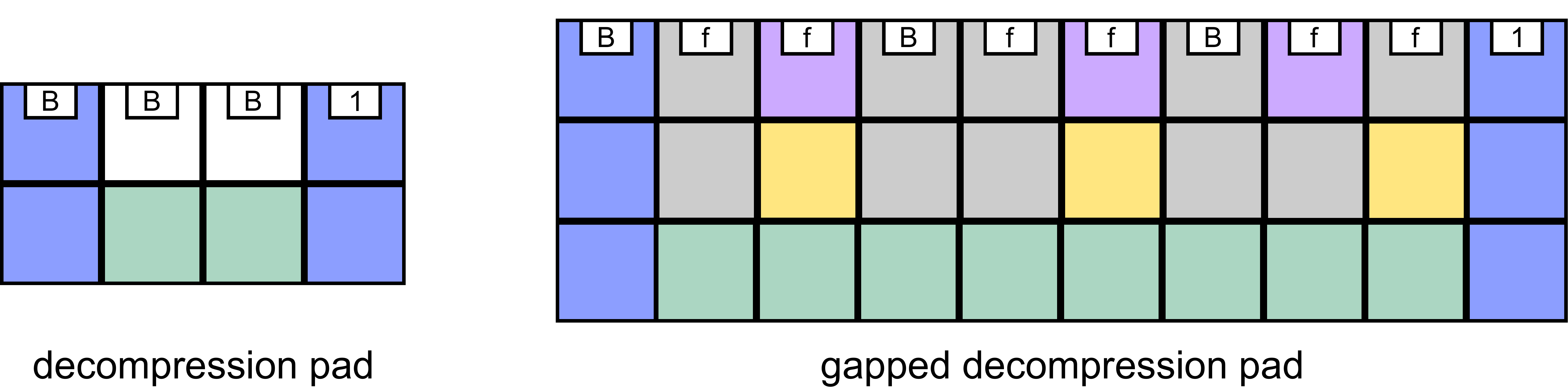}
    \caption{Gapped decompression pads: Here we see a gap-0 width-2 decompression pad (left) and a gap-2 width-3 decompression pad (right). We only use a constant more tile types per decompression pad and some slight modifications to the glues.}
\label{fig:fattening_decompression_pad}
\end{figure}

\subsection{Step 3: encoding via mix graph} \label{sec:step_3}
In this step, a system is designed wherein the mix graph is carefully designed to choose bits by mixing particular bins, a technique termed \emph{crazy mixing}~\cite{DDFIRSS07}. 
Observe that there are $b^2$ potential mix graph edges between a pair of adjacent stages of a $b$-bin staged assembly system.
Thus any choice of edges can be specified using $b^2$ bits - one bit per edge.

\begin{figure}[t]
        \centering
        \includegraphics[width=1.0\textwidth]{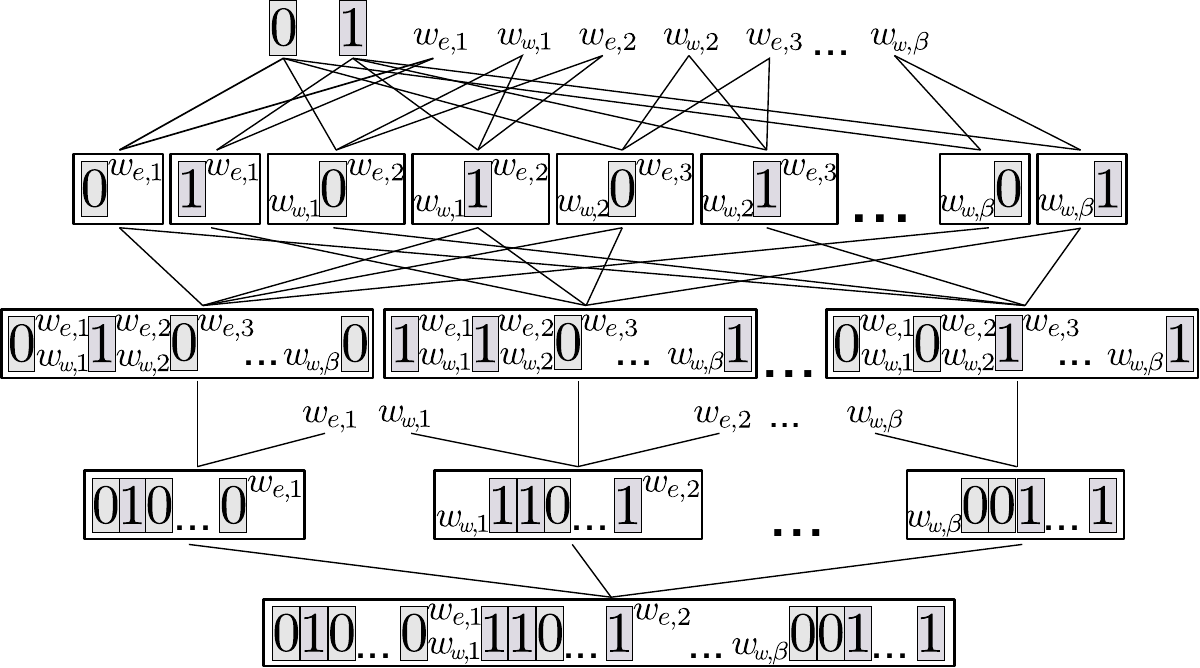}
        \caption{The creation of $\beta^2$-bit string pads using $\beta$ wings and $\BO{1}$ stages. The rectangles~0 and~1 represent bit sticks that may attach wings on either side; $w_{e,i}$ and $w_{w,i}$ represent the $i$th east and west wings respectively.}
        \label{fig:step3_abstract}
\end{figure}

\begin{lemma}
\label{lem:step_3}
There exists a constant $c$ such that for any $b,t\in \mathbb{N}$ with $b,t > c$ and any bit string $S$ of length $x$, there is a $\tau = 2$ staged assembly system with $b$ bins, $t$ tile types, and $\BO{\frac{x}{b^{2}} + \frac{\log \log b}{\log t}}$ stages that unique assembles a width-$7$ gap-$\Theta(\log{b})$ $x$-bit string pad representing~$S$.
\end{lemma}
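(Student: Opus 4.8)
The plan is to exploit the $\Theta(b^2)$ potential edges between each pair of adjacent stages to encode $\Theta(b^2)$ bits of $S$ per constant number of stages, so that $\mathcal{O}(x/b^2)$ rounds suffice to encode all $x$ bits; the additive $\mathcal{O}(\frac{\log\log b}{\log t})$ term will come entirely from the one-time construction of the wings of Lemma~\ref{lem:wings}. Fix constant fractions $\beta = \Theta(b)$ bins and $\gamma = \Theta(t)$ tile types. First I would build, once, all $\Theta(\log b)$-bit west and east wings via Lemma~\ref{lem:wings} in $\mathcal{O}(\frac{\log\log b}{\log t})$ stages, together with the $\mathcal{O}(1)$ bit-stick tile types that expose a north $g_0$ or $g_1$ glue and the $\mathcal{O}(1)$ filler tiles of Section~\ref{sec:wings}; these wing and bit-stick assembly types are then carried forward unchanged through a reserved block of $\mathcal{O}(\beta)$ bins in every subsequent stage (assembly types persist and are available in unlimited supply once produced).

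Each encoding \emph{round} (see Figure~\ref{fig:step3_abstract}) uses $\mathcal{O}(1)$ stages and proceeds as follows. Designate $\beta$ \emph{output} bins $O_1,\dots,O_\beta$, where $O_j$ is to hold the length-$\beta$ subpad carrying bits $(j-1)\beta+1,\dots,j\beta$ of $S$. Maintain $2\beta$ \emph{source} bins $V_{i,0}, V_{i,1}$, each holding a value-$0$ or value-$1$ bit stick flanked by \emph{inner} wings of index $i$ and $i+1$; the crazy-mixing step then installs, for every pair $(j,i)$, exactly one of the two edges $V_{i,0}\to O_j$ or $V_{i,1}\to O_j$, so that this choice among the $2\beta^2$ candidate edges encodes the $\beta^2=\Theta(b^2)$ target bits of the round. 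Inside each $O_j$ the interlocking wing \emph{teeth} force the $\beta$ sticks to assemble in index order into the intended subpad. I would then attach \emph{outer} wings of index $j$ and $j+1$ to the exposed end glues of each $O_j$ (a parallel mix, $\mathcal{O}(1)$ stages) and concatenate all $\beta$ subpads in a single bin, the outer wings enforcing the order $O_1,\dots,O_\beta$ and yielding one length-$\beta^2$ bit string pad. A final $\mathcal{O}(1)$-stage step appends this chunk to the running pad via a connector at the junction, exactly as in the concatenation step of Lemma~\ref{lem:xbit_string}.

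Since wings have length $\Theta(\log b)$ and bit sticks have width $7$, the spacing between consecutive exposed bit glues is $\Theta(\log b)$ and the whole assembly has width $7$ (short regions being filled to width $7$ with $\mathcal{O}(1)$ filler tiles), so the output is a width-$7$ gap-$\Theta(\log b)$ pad as required. Repeating the round $\lceil x/\beta^2\rceil = \mathcal{O}(x/b^2)$ times encodes all of $S$, for a total of $\mathcal{O}(\frac{x}{b^2}+\frac{\log\log b}{\log t})$ stages; choosing $\beta$ a sufficiently small constant fraction of $b$ keeps the reserved, source, and output bins (all $\mathcal{O}(\beta)$ in number) within the budget of $b$, and keeps the $\mathcal{O}(\gamma)$ tile types within $t$, which also forces $b,t$ larger than some constant $c$.

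I expect the main obstacle to be the two-level wing scheme and the accompanying unique-production argument. The key difficulty is that a single round must order $\beta^2=\Theta(b^2)$ bit sticks, yet we can afford only $\Theta(b)$ distinct wing types, since storing $b^2$ wings in separate bins (as Lemma~\ref{lem:wings} does) would blow the bin budget; resolving this forces the split into $\Theta(b)$ \emph{inner} indices ordering sticks within each subpad and $\Theta(b)$ \emph{outer} indices ordering the subpads, distributing the $\Theta(\log b)$ available index bits across the two families. One must then verify that the geometric teeth-matching of Section~\ref{sec:wings} rules out every mismatched attachment, both within and across the two levels, so that each bin uniquely produces its intended assembly and the final pad is uniquely produced; in particular, checking that the $2\beta^2$ candidate source-to-output edges realize exactly the intended $\beta^2$ independent bits, with no interference among output bins that share the $V_{i,v}$ source bins, is the crux of the argument.
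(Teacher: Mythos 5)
Your proposal is correct and follows essentially the same route as the paper's proof: a one-time construction of all $\Theta(\log b)$-bit wings via Lemma~\ref{lem:wings} (accounting for the additive $\BO{\frac{\log\log b}{\log t}}$ term), followed by repeated $\BO{1}$-stage rounds in which the choice of mix-graph edges from $2\beta$ winged-bit-stick bins into $\beta$ output bins encodes $\beta^2 = \Theta(b^2)$ bits, with wings ordering the sticks within each subpad and then the subpads within each $\beta^2$-bit chunk before appending to the growing pad. Your ``inner''/``outer'' wing distinction corresponds to the paper's reuse of the same wing family in its Stages~1 and~3, and your width and gap analysis matches the paper's.
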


\begin{proof}
The construction divides the target $x$-bit string pad into $b^2$-bit subpads, and a $\BO{1}$-stage process is repeated to specify the $b^2$ bits of each of these subpad.
An overview of the construction is shown in Figure~\ref{fig:step3_abstract}.

Let $\gamma$ and $\beta$ represent some constant fractions of $t$ and $b$ respectively.
Utilize $\gamma$ tiles and $\beta$ bins to construct all length-$\log_2(\beta)$ west and east wings according to Section~\ref{sec:wings}.
Denote the $i$th west and east wings by $w_{w,i}$ and $w_{e, i}$, respectively.
Also construct two \emph{bit sticks}: width-7, gap-0 1-bit sting pads, i.e. $7\times1$ assemblies that expose~1 or~0 north glues.

\paragraph{Stage~1.}
Mix $w_{e,i}$ and $w_{w,i-1}$ with bit stick~0 into a bin labeled $b^0_i$ for all $1 \leq i \leq \beta$.
Similarly, mix $w_{e,i}$ and $w_{w,i-1}$ with bit stick~1 into a bin labeled $b^1_i$.
In total, $2\beta$ bins are used.

\paragraph{Stage~2.}
Selectively mix specific bit sticks to assemble specific $\beta$-bit string pads across $\beta$ bins.
Specifically, mix either $b^0_i$ or $b^1_i$ for each $i$ across $\beta$ bins for a total of $\beta$ different $\beta$-bit string pads.

\paragraph{Stage~3.}
Attach wings to each of the $\beta$-bit string pads.
For each of the $\beta$ bins, mix $w_{e,i}$ and $w_{w,i-1}$ into the bins such that $w_{e,1}$ is mixed with the first $\beta$ bits of the desired $\beta^2$-bit string pad, $w_{e,2}$ and $w_{w,1}$ are mixed with the second $\beta$ bits of the desired $\beta^2$-bit string pad, etc.

\paragraph{Stage~4.}
Mix all $\beta$ bins (each containing a $\beta$-bit string pads) into a common bin to create $\beta^2$-bit string pads.
The wings ensure that the bit string pads attach in the desired order.
Repeat this process $\frac{x}{\beta^2}$ times, attaching each $\beta^2$-bit string pad onto a growing assembly containing all previous pads, to yield the final $x$-bit string pad.
In total, $\BO{\frac{\log\log \beta}{\log \gamma}}$ stages are used to construct the wings and $\BO{\frac{x}{b^2}}$ stages are used to assemble $\frac{x}{\beta^2}$ unique $\beta^2$-bit string pads.
Thus the total stage complexity is $\BO{\frac{x}{\beta^{2}} + \frac{\log \log \beta}{\log \gamma}} = \BO{\frac{x}{b^{2}} + \frac{\log \log b}{\log t}}$.
\end{proof}

\section{Assembly of $n\times n$ Squares}\label{sec:nxn}

Efficient assembly of $n \times n$ squares is obtained by combining bit string pads with a technique of Rothemund and Winfree~\cite{RotWin00}.
Their technique utilizes a binary counting mechanism which constructs a length $\Theta(n)$ rectangle with $\Theta(\log{n})$ width.
The mechanism uses $\BO{\log{n}}$ tile types to seed the counter at a certain value, and then $\BO{1}$ tile types attach in a ``zig-zag'' pattern, where ``zigs''  copy the value from the row below and ``zags'' increment the the value by $1$.
Once the binary counter increments to its maximum value (a string of $1$'s), the assembly stops growing.
The length of the resulting rectangle is determined by the seed bit string pad on which binary counting begins.
Two rectangles assembled this way serve as two adjacent sides of the desired square. 
The square can be filled in using these two rectangles as binding locations for generic filler tiles.
We utilize the bit string pad construction of Section~\ref{sec:bit_string_pad} to efficiently assemble the seed for the binary counting mechanism, requiring only an additional $\BO{1}$ tile types and $1$ stage to perform the binary counting and square filling.

\begin{figure}[t]
    \centering
    \includegraphics[width=0.85\textwidth]{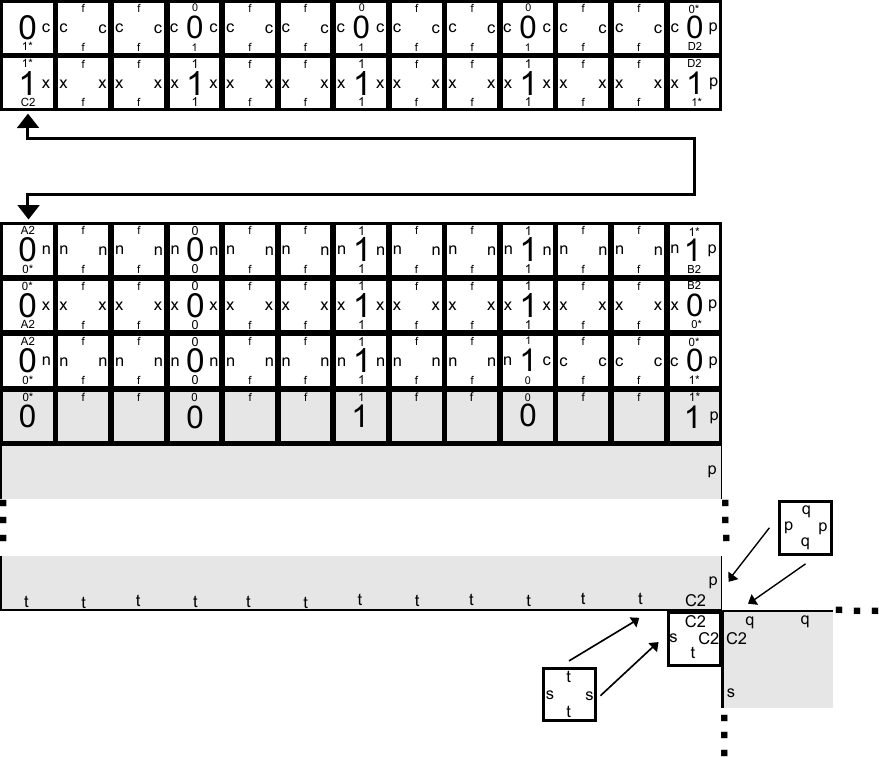}
    \caption{Constructing a counter seed. The bit string pads are shown in gray. Glues with a ``2'' in the string have strength-2, all other glues have strength~1.}
    \label{fig:nxnsquare}
\end{figure}

\begin{theorem} \label{thm:nxn}
There exists a constant $c$ such that for any $b,t,n\in\mathbb{N}$ with $b,t > c$, there exists a $\tau = 2$ staged assembly system with $b$ bins, $t$ tile types, and $\mathcal{O}(\frac{\log n - t\log t - tb}{b^2} + \frac{\log \log b}{\log t})$ stages whose uniquely produced output is an assembly whose shape is an $n\times n$ square.
\end{theorem}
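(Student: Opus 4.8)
The plan is to reduce the assembly of an $n \times n$ square to the efficient assembly of a single \emph{counter seed}, and then invoke Lemma~\ref{lem:xbit_string} to build that seed. The Rothemund--Winfree binary counting mechanism grows a rectangle whose length is determined by $2^k - s$, where $k = \Theta(\log n)$ is the number of counter bits and $s$ is the value at which the counter is seeded. Accordingly, I would first fix $k = \lceil \log_2 n \rceil$ and choose the seed value $s$ so that the counter increments exactly the right number of times (absorbing the $\BO{1}$ additive corrections from the seed row, the corner tiles, and the final all-ones halting condition) to produce a side of length exactly $n$. The bit string encoding $s$ then has length $x = \Theta(\log n)$.

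Next I would apply Lemma~\ref{lem:xbit_string} to this bit string of length $x = \Theta(\log n)$: there is a $\tau = 2$ system with $b$ bins and $t$ tile types that uniquely produces a width-$7$ gap-$\Theta(\log b)$ $x$-bit string pad representing $S$ using $\BO{\frac{x - tb - t\log t}{b^2} + \frac{\log\log b}{\log t}}$ stages. Substituting $x = \Theta(\log n)$ yields exactly the claimed $\BO{\frac{\log n - t\log t - tb}{b^2} + \frac{\log\log b}{\log t}}$ stages, which I expect to dominate the cost of the entire construction.

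I would then attach a constant-size gadget of counting tiles to this pad, as in Figure~\ref{fig:nxnsquare}. A constant number of tile types suffice to (i) interface with the gap-$\Theta(\log b)$ bit string pad and turn it into a usable counter seed row, placing strength-$2$ glues where the counter should bind and strength-$1$ glues elsewhere, and (ii) carry out the zig-zag counting, where zigs copy the row below and zags increment by one, halting once the all-ones value is reached. This produces a rectangle of length $\Theta(n)$ in a single additional stage. Assembling two such rectangles (or arranging the seed so that the counter grows two adjacent arms) yields an L-shaped boundary forming two adjacent sides of the square, after which a constant number of generic \emph{filler} tile types fill the interior, all within $\BO{1}$ further stages. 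Finally, tallying complexity: the seed uses $b$ bins, $t$ tile types, and the stage count above, while conversion, counting, and filling add only $\BO{1}$ tile types and $\BO{1}$ stages, so the asymptotic stage bound is unchanged.

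The main obstacle I anticipate is the interface between the gapped bit string pad and the counter: the bits are exposed at intervals of $\Theta(\log b)$ rather than contiguously, so I must verify that a constant set of tiles can initiate counting on this spaced representation—treating the regularly placed $g_F$ gap positions as fixed, known symbols—without blowing up the tile count, and still produce a seed that yields a side length of exactly $n$ rather than merely $\Theta(n)$. Pinning the dimension down precisely requires choosing $s$ and the $\BO{1}$ correction terms with care. A secondary point to confirm is that the physical seed width, which is $\Theta(\log n \cdot \log b)$ due to the gaps and the counter's $\Theta(\log n)$ width, is dominated by $n$ so that the boundary rectangles fit inside the target square for all sufficiently large $n$.
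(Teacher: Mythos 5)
Your proposal follows essentially the same route as the paper's proof: fix $\lceil \log n \rceil$ counter bits, choose the seed value so that the Rothemund--Winfree zig-zag counter yields side length exactly $n$, assemble that seed as a bit string pad via Lemma~\ref{lem:xbit_string} (which dominates the stage count), bridge the gap-$\Theta(\log b)$ spacing with $\BO{1}$ generic information-transfer tiles that bind cooperatively against the known filler glues, and finish with two perpendicular counters plus generic filler tiles in $\BO{1}$ extra stages and tile types. The paper even folds the spacing overhead into the seed value itself --- its choice $m = 2^{n'-1} - (n-22)/2 - n'(2\log b' + 2)$ explicitly subtracts the length contributed by the $n'$ gaps of size $2\log b' + 2$ --- which is precisely the careful bookkeeping you correctly anticipate being necessary to hit exactly $n$ rather than $\Theta(n)$.

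The one genuine loose end is the corner case you flag but then wave off: you argue the seed's physical footprint $\Theta(\log n \cdot \log b)$ is dominated by $n$ ``for all sufficiently large $n$,'' but the theorem quantifies over \emph{all} $n \in \mathbb{N}$ with only $b, t > c$, and $b$ may be arbitrarily large relative to $n$. The paper closes this case explicitly: when the gaps would make the pad longer than $n$ (roughly $\log b > n$), it abandons the gapped pad and directly constructs the appropriate bit string pad with spacing~$0$; since in that regime $\log n = \BO{\log \log b}$, this direct construction costs only $\BO{\frac{\log\log b}{\log t}}$ stages and is absorbed by the additive term, so the stated bound holds for every $n$. Without some such fallback, your construction as written fails for small $n$ paired with very large $b$, so you should add this case split to complete the argument.
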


\begin{proof}
Let $c$ be the (constant) number of tile types used to implement the fixed-width ``zig-zag'' binary counting mechanism shown in~\cite{RotWin00}.
Let $t' = t-c$, $b' = b-2$, and $n' = \lceil \log n \rceil$.
Let $m = 2^{n'-1} - (n-22)/2 - n'(2\log b' + 2)$.
Using Lemma~\ref{lem:xbit_string}, construct two $\Theta(\log{b})$-gap $\lceil \log m \rceil$-bit string pads encoding $m$, where each construction each uses $b'$ bins, $t'$ tile types and $\BO{\frac{\lceil \log m \rceil-tb-t\log t}{b^2} + \frac{\log \log b}{\log t}}$ stages.
Figure~\ref{fig:nxnsquare} shows the construction, including modifications to the technique shown in~\cite{RotWin00}.

On both pads, a small modification is made: the glues of the first and last bits are made unique and the first bit's glue strength is set to~2.
This modification is necessary to implement a fixed-width binary counting mechanism as in~\cite{RotWin00} and uses $\BO{1}$ additional tile types.
Also, on the north-facing (east-facing) bit string pad, a unique strength-2 glue ${\rm C2}$ is placed on the south (west) face of the pad's southmost eastern (northmost western) tile.
This special glue is used to combine the two pads with a unique tile type.

Note that the bit string pads assembled in Section~\ref{sec:bit_string_pad} have substantial spacing between the exposed binary glues, but the counter of~\cite{RotWin00} has spacing~0.
This is resolved by adding generic tiles which transfer information horizontally.
These generic tiles use cooperative binding between a south-facing $f$ glue (which matches the glue that spaces the bits on the bit string pad) and west/east glues representing the information to be passed horizontally across spacing of $f$ glues.
The tiles also expose a north-facing $f$ glue to be used when the information needs to be transferred across the spacing in the row above.
Without loss of generality, rotated versions of these tiles are used in the east-growing counter.

The stage complexity of the system is $\BO{\frac{\lceil \log n \rceil-tb-t\log t}{b^2} + \frac{\log \log b}{\log t}}$.
Note that the length of the bit string pads assembled according to Lemma~\ref{lem:xbit_string} is dependent on $b$, the number of bins used to construct the bit string pad.
If $b$ is so large that the spacing between bits causes the width of the bit string pad to exceed $n$ (roughly $\log{b} > n$), we instead directly construct the appropriate bit string pad with spacing~0 using $\BO{\frac{\log \log b}{\log t}}$ stages.
\end{proof}

The following lower bound is derived from Lemma~\ref{lem:info-lowerbound} by observing that for almost all $n \in \mathbb{N}$, $\lfloor \log{n} \rfloor$ bits are needed to represent $n$.

\begin{theorem}
\label{thm:nxnLower}
For any $b,t\in\mathbb{N}$ and almost all $n\in\mathbb{N}$, any staged assembly system which uses at most $b$ bins and $t$ tile types whose uniquely produced output is an assembly whose shape is an $n\times n$ square must use $\Omega(\frac{\log{n} - t\log{t} - tb}{b^2})$ stages.
\end{theorem}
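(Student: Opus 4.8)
The plan is to combine the information-theoretic bound of Lemma~\ref{lem:info-lowerbound} with a counting argument over all target sizes. First I would observe that a system with at most $b$ bins, $t$ tile types, and at most $s$ stages is fully specified by $\BO{t\log{t} + sb^2 + tb}$ bits, so the number of \emph{distinct} such systems is at most $2^{c(t\log{t} + sb^2 + tb)}$ for some constant $c$. Because the output is uniquely produced, each system has a well-defined output shape, and a fixed shape is an $n \times n$ square for at most one value of $n$. Hence, writing $s(n)$ for the minimum stage complexity of any system with $\le b$ bins and $\le t$ tile types whose unique output is the $n \times n$ square, the set $\{n : s(n) \le s\}$ has cardinality at most $2^{c(t\log{t} + sb^2 + tb)}$.

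Next I would fix the per-$n$ threshold $s_0(n) = \frac{\log{n} - t\log{t} - tb}{2cb^2}$ and argue that the ``bad'' set $\{n : s(n) \le s_0(n)\}$ has density $0$, which yields $s(n) = \Omega(\frac{\log{n} - t\log{t} - tb}{b^2})$ for almost all $n$. To make the count work despite the threshold depending on $n$, I would partition $\{1, \dots, N\}$ into dyadic ranges $[2^k, 2^{k+1})$. For $n$ in such a range $\log{n} \le k+1$, so $s_0(n) \le \sigma_k := \frac{(k+1) - t\log{t} - tb}{2cb^2}$, and the bad elements of the range are contained in $\{n : s(n) \le \sigma_k\}$, a set of size at most $2^{c(t\log{t} + \sigma_k b^2 + tb)}$. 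Substituting $\sigma_k$ and simplifying, the exponent collapses to $\frac{k+1}{2} + (c - \tfrac{1}{2})(t\log{t} + tb)$, so the number of bad $n$ in the range is $O_{b,t}(2^{k/2})$, a vanishing fraction of the $2^k$ integers it contains.

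Summing over $k \le \log{N}$ then gives $O_{b,t}(\sqrt{N}) = o(N)$ bad values below $N$, so the bad set has density $0$ and the theorem follows. The degenerate regime $\log{n} \le t\log{t} + tb$, where the claimed bound is non-positive and the statement is vacuous, needs no separate treatment: there $s_0(n) \le 0 < 1 \le s(n)$, so no such $n$ is bad, and correspondingly the dyadic ranges with $\sigma_k < 0$ count the empty set. I expect the main obstacle to be purely bookkeeping: the threshold varies with $n$ while the counting bound is cleanest for a fixed stage budget, and one must check that the constant $c$ from Lemma~\ref{lem:info-lowerbound} is absorbed so that the $2^{k/2}$ growth genuinely beats the $2^k$ range size. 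A secondary point to verify carefully is the ``at most one $n$ per system'' reduction, which relies on unique production to make each system's output shape well defined.
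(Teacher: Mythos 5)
Your proposal is correct and takes essentially the same route as the paper: the paper derives Theorem~\ref{thm:nxnLower} directly from Lemma~\ref{lem:info-lowerbound} together with the observation that almost all $n$ require $\lfloor \log n \rfloor$ bits to specify, which is precisely the counting argument you carry out (at most $2^{\BO{t\log t + sb^2 + tb}}$ systems, each determining at most one $n$). Your dyadic-range bookkeeping and the handling of the vacuous regime simply make explicit the ``almost all'' step the paper leaves implicit.
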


\section{Assembly of Scaled Shapes} \label{sec:scaledShapes}

\begin{figure}[t]
	\centering
        \includegraphics[width=0.43\textwidth]{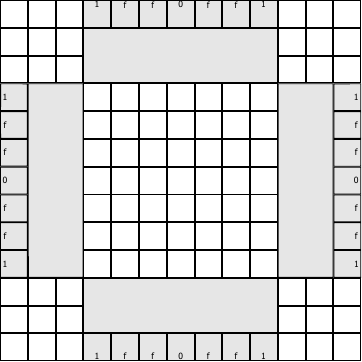}
    \caption{Construction of the modified seed block. Bit string pads are colored in gray.  We concatenate four $K(S)$-bit string pads representing $S$.}
    \label{seedblock}
\end{figure}

Efficient assembly of arbitrary shapes (up to scaling) is achieved by combining bit string pads with the shape-building scheme of Soloveichik and Winfree~\cite{SolWin07}.
Their construction uses two subsets of tile types: a varying set to encode the binary description of the target shape and a fixed set to decode the binary description and build the shape.
We replace the first set with a bit string pad encoding the same information.

\begin{theorem} \label{thm:shapes}
There exists a constant $c$ such that for any $b,t\in\mathbb{N}$ with $b,t > c$ and any shape $S$ of Kolmogorov complexity $K(S)$, there exists a $\tau = 2$ staged assembly system with $b$ bins, $t$ tile types, and $\mathcal{O}(\frac{K(S)-t\log t - tb}{b^2} + \frac{\log \log b}{\log t})$ stages whose uniquely produced output has shape $S$ at some scale factor.
\end{theorem}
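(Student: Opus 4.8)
The plan is to reuse the shape-building scheme of Soloveichik and Winfree~\cite{SolWin07}, replacing the program-encoding seed tiles of their construction with an efficiently assembled bit string pad. Recall that~\cite{SolWin07} fixes a universal set of $\BO{1}$ tile types that, when seeded with a binary encoding of a program $\pi$ of length $K(S)$ whose output describes $S$, simulates $\pi$ and grows an assembly whose shape is $S$ scaled by some factor. The only portion of their system whose size depends on $S$ is the seed that encodes $\pi$; every remaining tile type is generic, so converting their seed into an efficiently assembled pad immediately transfers the savings.

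First I would invoke Lemma~\ref{lem:xbit_string} with $x = K(S)$, allotting a constant fraction of the $t$ tile types and $b$ bins, to assemble a width-$7$ gap-$\Theta(\log b)$ $K(S)$-bit string pad representing $\pi$ in $\BO{\frac{K(S)-tb-t\log t}{b^2} + \frac{\log\log b}{\log t}}$ stages. As in Figure~\ref{seedblock}, I would then form the modified seed block by concatenating four copies of this pad, decorated at their ends with $\BO{1}$ special-glue tiles analogous to those used in the proof of Theorem~\ref{thm:nxn}, so that the encoding of $\pi$ is presented along the boundary of the seed block in the orientation expected by the computation. This concatenation costs only $\BO{1}$ additional tile types and one additional stage.

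Finally I would attach the fixed $\BO{1}$-size Soloveichik--Winfree decoding and construction tiles, which read the bits of $\pi$ off the pad, simulate the Turing machine, and assemble the scaled copy of $S$. Exactly as in Theorem~\ref{thm:nxn}, the $\Theta(\log b)$ gap of the bit string pad must be reconciled with the gap-$0$ seed expected by the computation, which is handled by generic information-transfer tiles that cooperatively carry each bit horizontally across the spacing via matching $f$ glues, using only $\BO{1}$ tile types and one further stage. I expect this interfacing step to be the main obstacle: folding the single linear pad into the two-dimensional seed-block frame and bridging the gap while preserving unique production, so that the computation reads exactly the bits of $\pi$ in the correct order and no spurious assemblies form. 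Granting this, the stage complexity is dominated by the pad construction, yielding the claimed $\BO{\frac{K(S)-t\log t-tb}{b^2} + \frac{\log\log b}{\log t}}$ bound.
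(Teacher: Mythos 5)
Your overall route is the paper's: invoke Lemma~\ref{lem:xbit_string} to build $K(S)$-bit string pads encoding a program for $S$, hand these to the fixed Soloveichik--Winfree simulation tile set in a single final mixing stage, and bridge the gap-$\Theta(\log b)$ spacing with $\BO{1}$ generic information-transfer tiles exactly as in Theorem~\ref{thm:nxn}. However, your seed-block formation has a genuine gap, and it is precisely the step you flag and then ``grant.'' First, ``four copies of this pad'' cannot work: tiles are non-rotatable, so a pad exposing its bits via north-facing glues cannot serve as the east, south, or west side of the block. The paper instead constructs \emph{four different} pads, one per orientation, by splitting the resources into constant fractions ($t' = \frac{t-c}{5}$, $b' = \frac{b-1}{5}$) and running four separate instances of the Lemma~\ref{lem:xbit_string} construction (plus a fifth share for the core, below). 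This costs only constant factors, but it is a different construction per side, not decoration of one pad with $\BO{1}$ special-glue tiles.

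Second, and more substantively, the seed block is not a hollow frame of concatenated pads. The paper attaches the four pads to the four sides of a \emph{filled} $\left(2K(S)\log K(S)+2\right) \times \left(2K(S)\log K(S)+2\right)$ square core, itself assembled by the method of Theorem~\ref{thm:nxn} with the fifth share of tiles and bins; see Figure~\ref{seedblock}. This resolves exactly the obstacle you name: the rigid core fixes the relative placement of the four pads (so each side presents its bits in the correct position and order, and unique production is preserved, since each pad has a well-defined attachment site on the core rather than pads needing to close up a frame corner-to-corner), and it makes the seed block a genuine full block so that the terminal assembly's shape is exactly $S$ at the block scale, with no hole at the seed cell. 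The core's stage cost, $\BO{\frac{\log(2K(S)\log K(S)+2) - t'\log t' - t'b'}{b'^2} + \frac{\log\log b'}{\log t'}}$, is dominated by the pad constructions, so the claimed bound survives; but without the core (or an explicit filling-and-alignment mechanism replacing it), your version does not yield a well-formed, uniquely produced seed block, and the final shape would be wrong at the seed. Supplying this square core is the missing idea.
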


\begin{proof}
Observe that the tile set described in~\cite{SolWin07} uniquely constructs the same terminal assembly, namely a \emph{scaled} version of $S$ where each cell is replaced by a square block of cells, when run at temperature~2 in the two-handed mixing model.
It does so via a Kolmogorov-complexity-optimal Turing machine simulation of a machine that computes a spanning tree of the shape given a seed assembly or \emph{seed block} encoding the shape.
The simulation is then run as it ``fills in'' the shape, beginning with the seed block.
Here a similar seed block is constructed and consists of four bit string pads, a square ``core'' and additional filler tiles.

Let $t' = \frac{t-c}{5}$ where $c$ is the (constant) number of tile types required by~\cite{SolWin07} to carry out the simulation of a (fixed) universal Turing machine.
Let $b' = \frac{b-1}{5}$.

Use the method of Lemma~\ref{lem:xbit_string} to construct the modified seed block by assembling four different $K(S)$-bit string pads representing a program that outputs $S$, each using $b'$ bins, $t'$ tile types and $\BO{\frac{K(S) - t'\log t' - t'b'}{b'^2} + \frac{\log \log b'}{\log t'}}$ stages.
These four pads (each with dimensions $(2K(S) \log{K(S)}+2) \times \BO{1}$) are attached to the four sides of a $\left ( 2K(S)\log K(S)+2 \right ) \times \left ( 2K(S)\log K(S)+2 \right )$ square constructed as in Theorem~\ref{thm:nxn} using $t'$ tile and $b'$ bins in $\BO{\frac{\log(2K(S)\log K(S)+2) - t'\log t' - t'b'}{b'^2} + \frac{\log \log b'}{\log t'}}$ stages.
An abstract figure of the completed seed block can be seen in Figure~\ref{seedblock}.
The Turing simulation occurs in one stage by mixing the four concatenated bit string pads into one bin that contains the fixed set of tile types simulating a Turing machine described in~\cite{SolWin07}.
The bit string pads contain spacing between the exposed binary glues, while the simulation tile types of~\cite{SolWin07} expect adjacent glues.
This is resolved by modifying the Turing-machine-simulation tile set to include generic tiles for transferring information across spacing, similar to the tiles of the same purpose discussed in the proof of Theorem~\ref{thm:nxn}. We need at most $1$ such tile for each tile in the (constant-sized) Turing-machine-simulation tile set, for a constant increase in tile complexity.
The stage complexity is $4\times\mathcal{O}(\frac{K(S) - t'\log t' - t'b'}{{b'}^2} + \frac{\log \log b'}{\log t'}) + \mathcal{O}(\frac{\log( 2K(S)\log K(S)+2 ) - t'\log t' - t'b'}{b'^2} + \frac{\log \log b'}{\log t'}) = \mathcal{O}(\frac{K(S) - t\log t - tb}{b^2} + \frac{\log \log b}{\log t})$.
\end{proof}

The following theorem follows from the information-theoretic bound of Lemma~\ref{lem:info-lowerbound}.

\begin{theorem}
\label{thm:shapesLower}
For any $b,t\in\mathbb{N}$ and shape $S$ with Kolmogorov complexity $K(S)$, any staged assembly system which uses at most $b$ bins and $t$ tile types whose uniquely produced output has shape $S$ must use $\Omega(\frac{K(S) - t\log t - tb}{b^2})$ stages.
\end{theorem}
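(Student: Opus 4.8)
The plan is a direct information-theoretic counting argument: a staged system that uniquely produces shape $S$ is itself a description of $S$ with only constant overhead, so the number of bits needed to specify the system must be at least $K(S)$, and Lemma~\ref{lem:info-lowerbound} then converts this into a lower bound on the number of stages $s$.

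First I would argue that the output shape is computable from a complete description of the system. Since the system uniquely produces a finite terminal assembly of shape $S$, every producible assembly feeding toward the output is a subassembly of a terminal assembly and hence finite; up to translation there are only finitely many such assemblies. Thus a fixed algorithm can enumerate the producible assemblies of each bin, identify the terminal ones, propagate them forward through the acyclic finite mix graph, and finally read off the domain of a final-stage output assembly as a description of $S$. Second, fix the universal machine underlying $K(\cdot)$ and build a single \emph{interpreter} $P$ that reads an encoding of a staged system in the exact format counted by Lemma~\ref{lem:info-lowerbound}, runs the simulation just described, and prints a description of the resulting shape. Because $P$ does not depend on $S$, we have $|P| = \BO{1}$.

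Third, take any $\tau$-bounded staged system with $b$ bins, $t$ tile types, and $s$ stages whose uniquely produced output has shape $S$. By Lemma~\ref{lem:info-lowerbound} it admits a description $D$ with $|D| = \BO{t\log t + sb^2 + tb}$ bits. The concatenation of $P$ and $D$ is a program that outputs $S$, so by minimality of Kolmogorov complexity,
$$K(S) \le |P| + |D| = \BO{1} + \BO{t\log t + sb^2 + tb}.$$
Isolating the $sb^2$ term and rearranging yields
$$s = \Omega\!\left(\frac{K(S) - t\log t - tb}{b^2}\right),$$
as claimed. The flexible-glue analogue is identical, invoking instead the second bound of Lemma~\ref{lem:info-lowerbound}.

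The main obstacle is not the arithmetic but the computability step: I must be sure that unique production of a finite shape genuinely bounds the size of every assembly contributing to the output, so that the interpreter $P$ halts on inputs encoding such systems, and that the encoding of Lemma~\ref{lem:info-lowerbound} is faithful enough to reconstruct every feature of the system that affects its output shape. Bins whose terminal assemblies are discarded (not connected forward to the final stage) need not be simulated, which sidesteps any halting concern for assemblies that do not reach the output. A secondary care point is bookkeeping the hidden constants: the additive $\BO{1}$ interpreter cost and the constants inside Lemma~\ref{lem:info-lowerbound} must be absorbed so that the final expression is a genuine $\Omega$ bound and is not swamped by the subtracted $t\log t + tb$ terms.
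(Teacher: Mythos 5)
Your proposal is correct and is exactly the argument the paper intends: the paper derives Theorem~\ref{thm:shapesLower} directly from Lemma~\ref{lem:info-lowerbound} via the observation that a system description plus a constant-size simulator is a program outputting $S$, so $K(S) \le \BO{t\log t + sb^2 + tb}$, which rearranges to $s = \Omega\bigl(\frac{K(S) - t\log t - tb}{b^2}\bigr)$. Your additional care about simulability (finiteness of producible assemblies under unique production, so the interpreter halts) merely fills in details the paper leaves implicit.
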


\section{Flexible Glues}
\label{sec:flexible_glues}

Here, we consider reducing stage complexity using \emph{flexible glues} that allow non-equal glue pairs to have positive strength.
These can be used to encode $\Theta(t^2)$ bits rather than $\Theta(t\log{t})$ bits in $t$ tile types, reducing both the upper and lower stage complexity bounds.
For the upper bound, the additional bits are encoded by  modifying Step~2 of the bit string pad construction of Section~\ref{sec:bit_string_pad} to use a modified decompression bad, similar to the technique introduced in~\cite{AGKS05g}.

\begin{definition}[Flexible decompression pad]
A width-$k$ length-$d^2$ flexible decompression pad is a $k\times d^2$ rectangular assembly with $d^2$ north glue types from the set $\{{g_{\rm start}}, g_0, g_1, \dots, g_d\}$ exposed on the north face of the rectangle.
The westmost glue is $g_{\rm start}$, the following $r-1$ glues have type $g_0$, followed by $d$ glues of type $g_1$, $d$ glues of type $g_2$, and so on.
The exposed south, east, and west tile edges have glues $g_S$, $g_E$, and $g_W$, respectively.
\end{definition}

\begin{figure}[ht]
    \centering
    \includegraphics[width=1.0\textwidth]{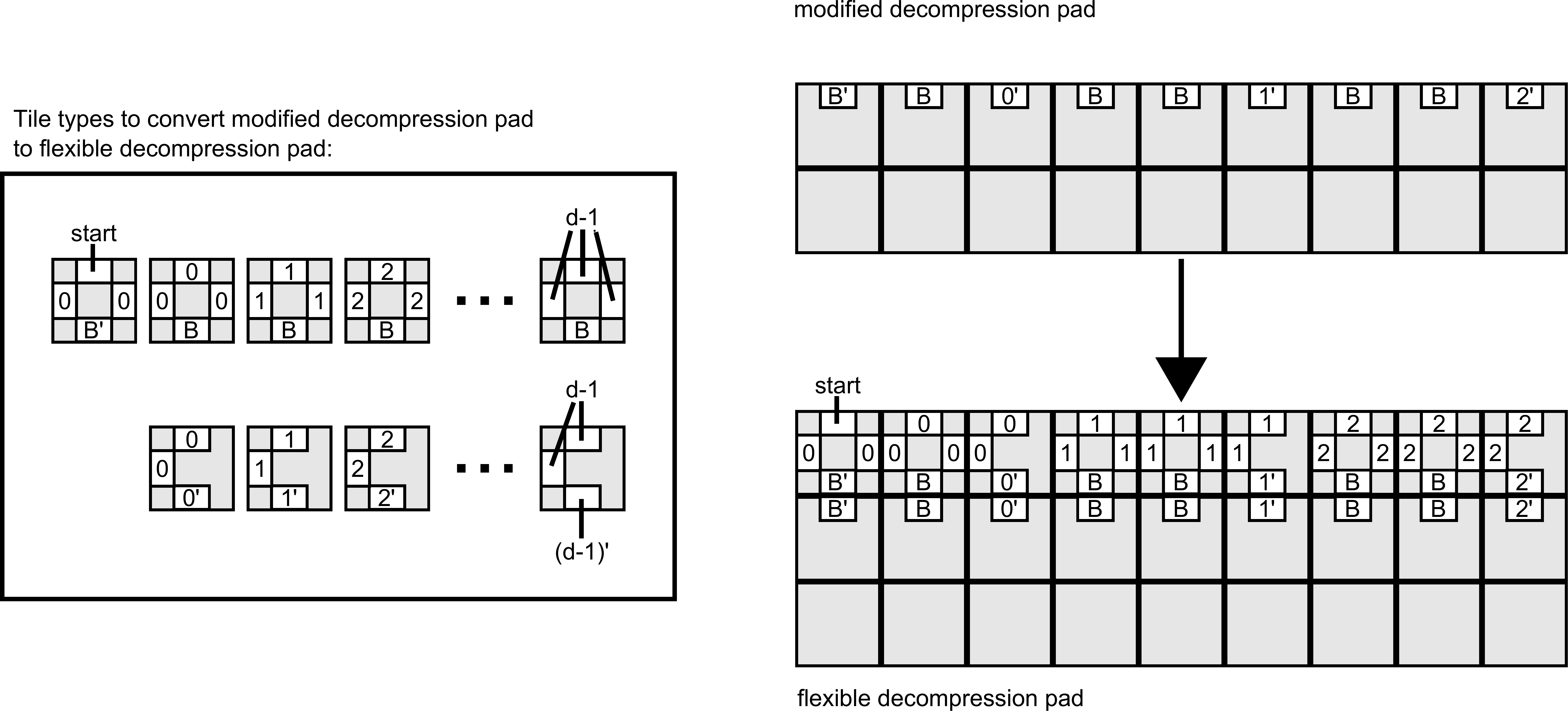}
    \caption{The templates to convert a modified decompression pad to a flexible decompression pad using $2d+1$ tile types, where integer $d \ge 3$, on the left. Using these additional tile types, a modified decompression pad is converted into a flexible decompression pad. A \emph{modified decompression pad} has a westmost northmost glue of $g_{B'}$ and every non-$g_B$ glue on the north surface is a special \emph{prime} version distinct from other similar glue types. On the left, a width-2 modified decompression pad representing the string $012$ in base-8 is converted to a width-3 length-9 flexible decompression pad.}
    \label{fig:fx_conversion}
\end{figure}

In order to build the flexible decompression pad, a modified decompression pad representing a number $C = c_0 c_1 \dots c_{d-1}$ in base $2^d$ is needed.

\begin{lemma}\label{lem:fx_decompression_pad}
Given an integer $d \ge 3$, there exists a $\tau = 2$ staged assembly system with 1 bin, $8d-1$ tile types, and 1 stage that assembles a width-3 length-$d^2$ flexible decompression pad.
\end{lemma}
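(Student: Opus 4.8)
The plan is to build the flexible decompression pad in two parts, reusing the decompression‑pad machinery of Lemma~\ref{lem:decompression_pad} and then applying a short flexible‑glue ``conversion'' layer, exactly as suggested by Figure~\ref{fig:fx_conversion}. The key observation is that the fixed glue pattern demanded by the definition --- $g_{\rm start}$, then $d-1$ copies of $g_0$, then $d$ copies of each $g_i$ for $i = 1, \dots, d-1$ --- is precisely what one obtains by ``broadcasting'' the digit values $0,1,\dots,d-1$ across their length-$d$ blocks (the first block being shortened by one slot to accommodate the $g_{\rm start}$ cap). So I would first construct a \emph{modified decompression pad} that encodes the specific number $C = c_0 c_1 \cdots c_{d-1}$ whose $i$-th digit is $c_i = i$, in base $u = 2^d$; this is the ``$012$'' example of Figure~\ref{fig:fx_conversion} generalized to arbitrary $d$.

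First I would invoke Lemma~\ref{lem:decompression_pad} with $x = d$ (so $u = 2^d$) and digit count $d$; since $d \ge 3$ the hypotheses are met, and the lemma produces a width-2 $d$-digit base-$2^d$ decompression pad representing $C$ using at most $5d + \log(u) - 2 = 6d - 2$ tile types in a single bin and stage. The only change needed is cosmetic: relabel the westmost north glue to $g_{B'}$ and replace each non-$g_B$ north glue with a distinct \emph{primed} copy, so that the conversion layer can unambiguously distinguish digit positions from spacer positions. These relabelings do not affect the tile count.

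Next I would add the $2d+1$ \emph{conversion template} tile types of Figure~\ref{fig:fx_conversion}. Each template cooperatively attaches ($\tau = 2$) to the primed north glues of the modified pad, reading the digit value exposed at each length-$d$ block and re-exposing it as $d$ repeated copies of the corresponding output glue $g_i$, while also producing the $g_{\rm start}$ cap and the leading $(d-1)$-copy $g_0$ block. This is where flexible glues do the real work: a single template glue is assigned positive strength against several distinct primed digit glues, which is exactly what lets a set of only $2d+1$ tiles handle all $d$ digit values (rather than the $\Theta(d)$-per-digit forced by rigid glues). Summing the two parts gives $(6d-2) + (2d+1) = 8d-1$ tile types, and because both the decompression pad and the conversion layer grow deterministically, the whole system runs in one bin and one stage at $\tau = 2$ with a uniquely produced width-3 length-$d^2$ terminal assembly.

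The main obstacle is the correctness and unique production of the conversion layer: I must specify the flexible glue strength table and the $2d+1$ template geometries so that (i) exactly $d$ consecutive copies of $g_i$ are exposed for the $i$-th digit, with the leading block correctly split into $g_{\rm start}$ plus $d-1$ copies of $g_0$, (ii) no template can attach in a spurious position or order, and (iii) the cooperative bonds force the unique left-to-right filling depicted in Figure~\ref{fig:fx_conversion}. Verifying that the flexible glue table admits \emph{no} unintended positive-strength bindings --- so that every producible assembly embeds in the single intended terminal pad --- is the delicate part; by comparison, the tile-count bookkeeping and the appeal to Lemma~\ref{lem:decompression_pad} are routine.
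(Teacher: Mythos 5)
Your proposal is essentially the paper's proof: invoke Lemma~\ref{lem:decompression_pad} with $u = 2^d$ on the digit string $0,1,\dots,d-1$ (the ``012'' of Figure~\ref{fig:fx_conversion} generalized), relabel the westmost north glue $g_{B'}$ and prime the non-$g_B$ north glues, then attach the $2d+1$ conversion template tiles, giving $(5d + \log(2^d) - 2) + (2d+1) = (6d-2) + (2d+1) = 8d-1$ tile types in one bin and one stage, exactly as in the paper. One correction to an aside in your argument: flexible glues are \emph{not} what makes $2d+1$ conversion tiles suffice. The conversion layer works with ordinary rigid cooperative binding --- one digit-reading tile per digit value $i$ (south glue the primed digit glue, north glue $g_i$) and one spacer-propagation tile per digit value (binding cooperatively to a $g_B$ spacer glue below and the $g_i$ signal from the west, re-exposing $g_i$ north and east), plus one $g_{\rm start}$ cap, totaling $2d+1$. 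Indeed, your stated mechanism, a single template glue given positive strength against several distinct primed digit glues, would fail if taken literally: such a tile exposes one fixed north glue wherever it sits, so it cannot produce the digit-specific blocks of $g_i$ glues the definition requires. Flexible glues do their real work only downstream, in Lemma~\ref{lem:step_2_1_fx}, where the data bits are encoded in the pairwise strength table between the pad's $g_i$ glues and the $2d$ decompression tiles; since your tile budget and construction outline are unaffected, this is a misattribution rather than a gap.
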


\begin{proof}
Start with the construction of Lemma~\ref{lem:decompression_pad} that yields a a width-2 length-$d^2$ decompression pad encoding $C$.
Modify the tile types of this construction such that the westmost northern glue is $g_{B'}$ and every non-$g_B$ glue on the north surface is a special \emph{prime} version, to differentiate between other similar glue types.
Then add $2d+1$ tiles that modify the north surface decompression pad to yield width-3 flexible decompression pad, as seen in Figure~\ref{fig:fx_conversion}.
This step requires $5d + \log{2^d}-2 $ tile types to build a modified decompression pad and $2d+1$ tiles to convert this modified decompression pad that into a flexible decompression pad.
Thus $2d + 1 + 5d + \log{2^d}-2 \leq 8d-1$ tile types are used in total.
\end{proof}

\begin{figure}[ht]
    \centering
    \includegraphics[width=0.9\textwidth]{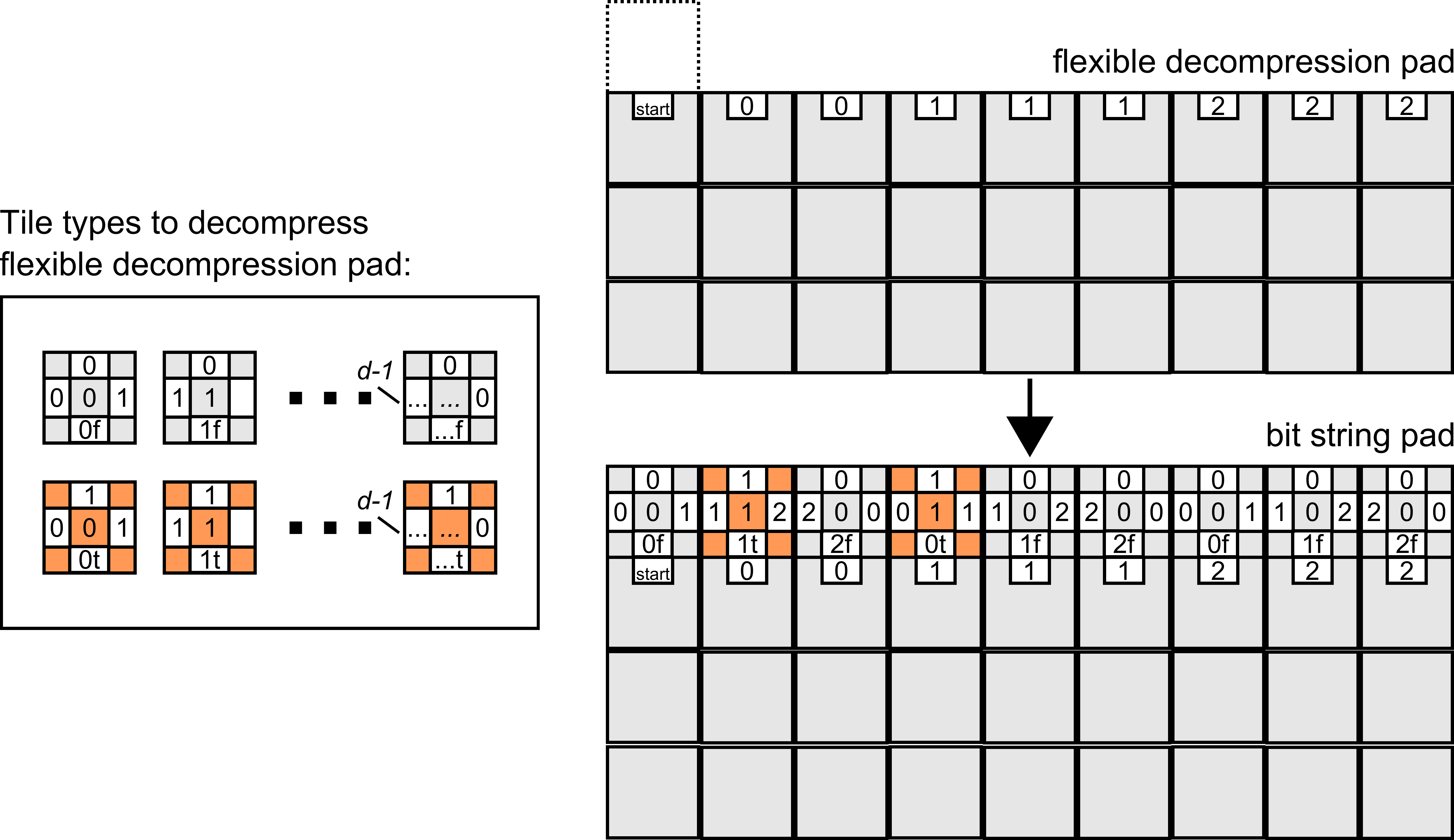}
    \caption{On the left, the templates for the decompression tiles needed to decompress a flexible decompression pad for any given $d \ge 3$. In the top right, an example of a length-9 flexible decompression pad. In the bottom right, the decompression tiles interact with the flexible decompression pad and glue function to assemble a bit string pad from a flexible decompression pad, representing the bit string $010100000$. The flexible glues form a bond of strength~2 between the glue pair ($g_{start}$, $g_{0f}$), strength~1 between glues pairs ($g_{0}$, $g_{0}$), ($g_{1}$, $g_{1}$), ($g_{2}$, $g_{2}$), ($g_{0}$, $g_{1t}$), ($g_{0}$, $g_{2f}$), ($g_{1}$, $g_{0t}$), ($g_{1}$, $g_{1f}$), ($g_{1}$, $g_{2f}$), ($g_{2}$, $g_{0f}$), ($g_{2}$, $g_{1f}$), and ($g_{2}$, $g_{2f}$), and strength~0 between all other glue pairs.}
    \label{fig:fx_decompression_example}
\end{figure}

\begin{lemma}
\label{lem:step_2_1_fx}
Given integers $d\ge 3$ and any length $d^2$ bit string $R$, there exists a 1 stage, 1 bin, $\tau = 2$ staged assembly system with flexible glues that assembles a width-4 gap-0 $d^2$-bit string pad representing $R$, using at most $10d-1$ tile types.
\end{lemma}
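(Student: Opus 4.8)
The plan is to build on the fixed-structure flexible decompression pad of Lemma~\ref{lem:fx_decompression_pad} and to push all $d^2$ bits of $R$ into the (flexible) glue function rather than into tile labels. First I would invoke Lemma~\ref{lem:fx_decompression_pad} to obtain, using $8d-1$ tile types, a width-$3$ length-$d^2$ flexible decompression pad whose north surface exposes the fixed marker sequence $g_{\rm start}, g_0, \dots, g_0, g_1, \dots, g_1, \dots, g_{d-1}, \dots, g_{d-1}$, consisting of one $g_{\rm start}$ glue, $d-1$ copies of $g_0$, and $d$ copies of each of $g_1, \dots, g_{d-1}$ (for a total of $1 + (d-1) + d(d-1) = d^2$ north glues). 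Viewing $R$ as a $d \times d$ array $R_{i,j}$ indexed by digit $i$ and column $j$, with $R_0$ occupying the $g_{\rm start}$ slot, the marker $g_i$ records which digit block a position lies in but deliberately encodes neither the column nor the bit value; those are supplied by the decompression tiles and by the glue function.

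Next I would add $2d$ \emph{decompression tiles}: for each column $j \in \{0, \dots, d-1\}$ a ``false'' tile and a ``true'' tile, sharing a common west glue $W_j$ and a common east glue $W_{(j+1) \bmod d}$ (so the east--west glues act as a mod-$d$ column counter that resets at each digit boundary), carrying south \emph{detection} glues $g_{jf}$ and $g_{jt}$ respectively, and exposing on the north the output glue $g_0$ (false) or $g_1$ (true). These $2d$ tiles together with the $8d-1$ pad tiles give the claimed total of $10d-1$ tile types, and stacking the single row of decompression tiles on the width-$3$ pad yields a width-$4$, gap-$0$ assembly.

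The bit string $R$ is then encoded entirely in the flexible glue function. I would set ${\rm str}(g_{\rm start}, g_{0X}) = 2$ where $X = R_0$, making this strength-$2$ bond the unique nucleation site of the decompression row; for every digit $i$ and column $j$ I would set ${\rm str}(g_i, g_{jX}) = 1$ exactly when the target bit at that position equals $X$ and $0$ otherwise, set each east--west pair $(W_j, W_j)$ to strength $1$, and set all remaining glue pairs to strength $0$. At temperature $\tau = 2$ the row then grows deterministically west-to-east: the first tile attaches by the lone strength-$2$ start bond, and every later tile attaches only by cooperative binding, combining a strength-$1$ west bond (which, via the column counter, forces the correct column $j$) with a strength-$1$ south bond to the current digit marker $g_i$ (which, by construction, is available for exactly the single bit value $X = R_{i,j}$). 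Thus exactly one tile can attach at each position, its north glue spells the intended bit, and no tile can nucleate or attach elsewhere.

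The main obstacle I expect is the unique-production argument rather than the tile count: I must verify that the interaction of the mod-$d$ column counter, the digit-marker south bonds, and the strength-$2$ nucleation admits no spurious attachment (a ``true'' and a ``false'' column-$j$ tile competing at one site, an off-by-one error arising because $g_{\rm start}$ consumes column $0$ of digit $0$ so the $g_0$ block carries only columns $1, \dots, d-1$, or an unintended second nucleation), and that the resulting terminal assembly is precisely the asserted width-$4$ gap-$0$ $d^2$-bit string pad representing $R$. Handling the digit-$0$ offset and confirming that every cut of the bond graph has strength at least $2$ are the delicate points; once these are checked, the lemma follows.
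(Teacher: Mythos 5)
Your proposal matches the paper's proof essentially step for step: you invoke Lemma~\ref{lem:fx_decompression_pad} for the $(8d-1)$-tile width-3 flexible decompression pad, add the same $2d$ column-indexed true/false decompression tiles (your explicit $W_j$ chaining glues play exactly the role of the paper's reused marker glues with ${\rm str}(g_j,g_j)=1$), and encode all $d^2$ bits of $R$ in the flexible glue function with the identical strength-2 $g_{\rm start}$ nucleation and cooperative strength-$(1{+}1)$ west/south growth at $\tau=2$, arriving at the same $10d-1$ total. If anything, you are more explicit than the paper about the digit-0 offset (the $g_0$ block having only $d-1$ positions) and the unique-production checks, which the paper delegates to its figures.
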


\begin{proof}
Consider a width-3 length $d^2$ flexible decompression pad.
The idea is to use $2d$ tile types and flexible glues to build a width-4 gap-0 $d^2$-bit string pad from the flexible decompression pad (see Figure~\ref{fig:fx_decompression_example}).
Consider a sequence of $d$ bitstrings $D = D_0, D_1, \dots, D_{d-1}$ with each $D_i = s_0 s_1 s_2 \dots s_{d-1}$ such that the in-order concatenation of all bitstrings in $D$ equals $R$.
Let $D_{i,j}$ denote the $j$th bit of the $i$th bit string of $D$.

The goal is to construct a glue function such that it specifies the tiles that can attach to the top of the flexible decompression pad to be the concatenation of the bitstrings in $D$.
Note that the tiles that have a ``0'' or ``1'' label as those with glues that end in ``f'' or ``t'', respectively.
Let $str(g, g')$ denote the strength between any two glues $g$ and $g'$.
Set the tile that attaches to the $g_{\rm start}$ glue to be one that exposes $g_0$ or a $g_1$ by setting $str(g_{\rm start},g_{0f})=2$ or $str(g_{\rm start},g_{0t})=2$, respectively.
For all $D_{i,j}$, we set $str(g_i,g_{jf})=0$ if and only if $D_{i,j}=0$ and $str(g_i,g_{jt})=1$, otherwise.
In addition, we set $str(g_0,g_0)=1$, $str(g_1,g_1)=1$, and so on, up to $str(g_d, g_d)=1$.

With this, we build a width-4 gap-0 $d^2$-bit string pad from the flexible decompression pad.
An example of this can be seen in Figure~\ref{fig:fx_decompression_example}.
Also, $8d-1$ tile types are used to build a width-3, length $d^2$ flexible decompression pad.
An additional $2d$ tile types are needed to decompress, using flexible glues, into a width-4 $d^2$-bit string pad.
So the total number of tile types used is $10d-1$.
\end{proof}

\begin{lemma}
\label{lem:step_2_2_fx}
There exists a constant $c$ such that for any $t\in\mathbb{N}$ with $t > c$ there exists an $x = \Theta(t^2)$ such that for any bit string $S$ of length $x$, there exists a  $\tau = 2$ staged assembly system with flexible glues with $1$ bin, $t$ tile types and $1$ stage whose uniquely produced output is a width-$4$ gap-$0$ $x$-bit string pad representing $S$.
\end{lemma}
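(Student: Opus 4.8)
The plan is to reduce this statement directly to Lemma~\ref{lem:step_2_1_fx}, in exact parallel to how Lemma~\ref{lem:step_2_2} reduces to Lemma~\ref{lem:step_2_1} in the standard-glue setting. The key observation is that Lemma~\ref{lem:step_2_1_fx} already yields a $d^2$-bit string pad from only $10d-1 = \Theta(d)$ tile types, so here there is no genuine optimization to perform: the flexibility of the glues does all of the information packing, and the bit yield is automatically quadratic in the tile count. Thus the entire content of the lemma is to repackage the per-$d$ guarantee of Lemma~\ref{lem:step_2_1_fx} into a per-$t$ statement.

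Concretely, given $t$, I would set $d$ to be the largest integer satisfying $10d-1 \le t$, namely $d = \lfloor (t+1)/10 \rfloor$, and define $x = d^2$. First I would check that $d \ge 3$ whenever $t$ exceeds a fixed constant $c$ (e.g.\ $c = 29$, since $d \ge 3 \iff t \ge 29$), which is precisely the hypothesis required to invoke Lemma~\ref{lem:step_2_1_fx}. Then, for any bit string $S$ of length $x = d^2$, I would apply Lemma~\ref{lem:step_2_1_fx} with $R = S$ to obtain a $\tau = 2$, $1$-bin, $1$-stage flexible-glue system using at most $10d-1 \le t$ tile types whose output is a width-$4$ gap-$0$ $d^2$-bit string pad representing $S$. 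If the convention demands exactly $t$ tile types, the remaining $t - (10d-1)$ types can be left inert (null glue on every edge), which leaves the producible set unchanged. Finally I would confirm the asymptotic claim: since $d = \lfloor (t+1)/10 \rfloor = \Theta(t)$, we get $x = d^2 = \Theta(t^2)$ as required, and unique production is inherited from Lemma~\ref{lem:step_2_1_fx}, whose glue function forces a single tile to attach at each exposed north glue of the flexible decompression pad.

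I expect essentially no real obstacle, as this is a bookkeeping lemma rather than a construction. The only points demanding care are the choice of the constant $c$ guaranteeing $d \ge 3$ and the verification of the clean $\Theta(t^2)$ scaling, both of which are immediate from the floor expression for $d$. This stands in deliberate contrast to the standard-glue Lemma~\ref{lem:step_2_2}, where a genuine optimization over the base $u$ was needed to extract $\Theta(t\log t)$ bits; flexible glues eliminate that trade-off entirely and push the yield up to $\Theta(t^2)$.
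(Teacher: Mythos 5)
Your proposal is correct and matches the paper's proof essentially verbatim: the paper likewise sets $d = \lfloor \frac{t+1}{10} \rfloor$, invokes Lemma~\ref{lem:step_2_1_fx} to obtain a width-$4$ $d^2$-bit string pad with $10d-1 \le t$ tile types, computes $x = d^2 = \Theta(t^2)$, and derives the constant $c = 29$ from the requirement $d \ge 3$. Your added remarks on padding with inert tile types and on inheriting unique production are harmless elaborations the paper leaves implicit.
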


\begin{proof}
Given $t$ tile types, consider how many bits can be produced using Lemma~\ref{lem:step_2_1_fx}.
Let $d=\lfloor \frac{t+1}{10} \rfloor$.
Invoke Lemma~\ref{lem:step_2_1_fx} to build a width-4 $d^2$-bit string pad with flexible glues using $10d-1$ tiles.
The number of bits produced is $y = d^2 = (\lfloor \frac{t+1}{10} \rfloor)^2 = \Theta(t^2)$.
Then by Lemma~\ref{lem:step_2_1_fx}, any width-4 $(\lfloor \frac{t+1}{10} \rfloor)^2$-bit string pad can be built in the flexible glue model using at most $t$~tiles, 1~stage, and 1~bin.
The smallest choice of $d$ requires $d=\lfloor \frac{t+1}{10} \rfloor \ge 3$, implying $t \geq 29$.
For all cases where $t \geq c$ we have a constant, $c=29$, where this lemma holds true.
\end{proof}

The improvements to Lemmas~\ref{lem:step_2_1_fx} and~\ref{lem:step_2_2_fx} allow for a larger bit string pad to be built in Step~2 when compared to standard glues, reducing stage complexity to $\mathcal{O}(\frac{\log \log b}{\log t} + \frac{x-tb-t^2}{b^2})$:

\begin{lemma}
\label{lem:xbit_stringFlex}
There exists a constant $c$ such that for any $b,t \in \mathbb{N}$ with $b,t > c$ and a bit string $S$ of length $x$, there is a $\tau = 2$ staged assembly system with flexible glues with $b$ bins, $t$ tile types, and $\mathcal{O}(\frac{x-tb-t^2}{b^2} + \frac{\log \log b}{\log t})$ stages whose uniquely produced output is a width-$9$ gap-$\Theta(\log b)$ $x$-bit string pad representing $S$.
\end{lemma}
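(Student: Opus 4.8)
The plan is to follow the identical three-step-plus-concatenation strategy of Lemma~\ref{lem:xbit_string}, replacing only the Step~2 ingredient with its flexible-glue counterpart and re-accounting for the bit budget. The key observation is that all of the infrastructure developed in Section~\ref{sec:bit_string_pad} (wings, fattening, Step~1, Step~3, and the connector-based concatenation) is glue-model agnostic: none of those constructions relies on the restriction ${\rm str}(g_1,g_2)=0$ for $g_1 \neq g_2$, so they carry over verbatim to the flexible setting. The single difference is that flexible glues let $t$ tile types encode $\Theta(t^2)$ bits rather than $\Theta(t\log t)$ bits in the tile-type step, as established by Lemma~\ref{lem:step_2_2_fx}.

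Concretely, I would allot a constant fraction $\frac{t}{c}$ of the tile types and $\frac{b}{c}$ of the bins to each of four steps, exactly as in Lemma~\ref{lem:xbit_string}. Step~1 assembles a width-$7$ gap-$\Theta(\log b)$ $\Theta(tb)$-bit subpad in $\BO{\frac{\log\log b}{\log t}}$ stages via Lemma~\ref{lem:step_1}. Step~2 is the modified step: invoke Lemma~\ref{lem:step_2_2_fx} to obtain a width-$4$ gap-$0$ $\Theta(t^2)$-bit subpad in one stage, then apply the fattening technique of Section~\ref{sec:fattening} (as in Section~\ref{sec:gapped_step_2}) to raise the gap from $0$ to $\Theta(\log b)$, costing $\BO{\frac{\log\log b}{\log t}}$ additional stages; this contributes $\Theta(t^2)$ bits instead of $\Theta(t\log t)$. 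Step~3 encodes the remaining $x - \Theta(tb) - \Theta(t^2)$ bits via the mix-graph ``crazy mixing'' of Lemma~\ref{lem:step_3}, in $\BO{\frac{x - tb - t^2}{b^2} + \frac{\log\log b}{\log t}}$ stages. Finally, the three subpads are concatenated using length-$\Theta(\log b)$ connector assemblies (Lemma~\ref{lem:wings}) in $\BO{1}$ additional stages, with $\BO{1}$ filler tiles used to normalize the differing widths ($7$, fattened-Step-2, and $7$) up to the common width $9$.

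Summing the stage complexities, Steps~$1$, $2$, and~$4$ each use $\BO{\frac{\log\log b}{\log t}}$ stages and Step~$3$ dominates with $\BO{\frac{x - tb - t^2}{b^2} + \frac{\log\log b}{\log t}}$, yielding the claimed total $\BO{\frac{x - tb - t^2}{b^2} + \frac{\log\log b}{\log t}}$. The width increases to $9$ because the flexible Step~2 subpad has width $4$ (versus $3$ in the standard case), and the concatenation/fill-in machinery must accommodate the widest of the three subpads. The constant $c$ is taken large enough to satisfy simultaneously the hypotheses of Lemmas~\ref{lem:step_1}, \ref{lem:step_2_2_fx}, \ref{lem:step_3}, and~\ref{lem:wings}.

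\textbf{Main obstacle.} The essential content is almost entirely bookkeeping rather than new construction, so the only genuinely delicate point is verifying that the flexible Step~2 subpad can be spliced into the same concatenation pipeline despite its different width and its reliance on flexible glues at the decompression interface. I would check that the connector assemblies and fill-in tiles, which operate purely on the exposed east/west boundary glues and geometry of the subpads, remain correctly typed when one subpad is width-$4$ and uses flexible glue interactions internally; since the flexible interactions are confined to the decompression step and the exposed boundary glues can be made ordinary matching glues, the concatenation step is unaffected. The remaining step is to confirm that allotting constant fractions of $t$ and $b$ to each step preserves the $\Theta(t^2)$ bit yield of Step~2 (a constant fraction of $t$ still gives $\Theta(t^2)$ bits), so that the total bit count across the three subpads indeed sums to $x$.
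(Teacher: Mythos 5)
Your proposal is correct and takes essentially the same route as the paper, which states Lemma~\ref{lem:xbit_stringFlex} without a separate proof precisely because the intended argument is the one you give: rerun the four-step pipeline of Lemma~\ref{lem:xbit_string} with Step~2 replaced by the flexible-glue construction of Lemmas~\ref{lem:step_2_1_fx} and~\ref{lem:step_2_2_fx}, fatten its gap to $\Theta(\log b)$, and let Step~3 absorb the remaining $x - \Theta(tb) - \Theta(t^2)$ bits. Your added care about isolating the flexible-glue interactions to the decompression interface (keeping all other glue pairs at strength zero) is a sensible detail the paper leaves implicit.
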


Nearly tight upper and lower bounds for square and general shape construction in the flexible glue model are obtained by replacing the bit string construction of Lemma~\ref{lem:xbit_string} with Lemma~\ref{lem:xbit_stringFlex}, and applying the flexible glue lower bound of Lemma~\ref{lem:info-lowerbound}:

\begin{theorem} \label{thm:nxnFlex}
There exists a constant $c$ such that for any $b,t,n\in\mathbb{N}$ such that $b,t > c$, there exists a $\tau = 2$ staged assembly system with flexible glues with $b$ bins, $t$ tile types and $\mathcal{O}(\frac{\log n - t^2 - tb}{b^2} + \frac{\log \log b}{\log t})$ stages whose uniquely produced output is an $n\times n$ square.
\end{theorem}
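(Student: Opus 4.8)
The plan is to follow the proof of Theorem~\ref{thm:nxn} essentially verbatim, replacing only the bit string pad construction of Lemma~\ref{lem:xbit_string} with its flexible-glue counterpart, Lemma~\ref{lem:xbit_stringFlex}. The structural machinery of the square construction---the fixed-width ``zig-zag'' binary counter of~\cite{RotWin00}, the two-pad counter seed, and the square-filling step---depends only on a supply of seed bit string pads encoding a chosen value $m$, and is entirely agnostic to how those pads are assembled. Since flexible glues are confined to the interior of the Step~2 subpad (and hence to the proof of Lemma~\ref{lem:xbit_stringFlex}), the only effect of the substitution is to change the stage cost of producing the seed.

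Concretely, I would again let $c$ be the constant number of tile types implementing the counter, put $t' = t - c$, $b' = b - 2$, $n' = \lceil \log n \rceil$, and choose the same seed value $m = 2^{n'-1} - (n-22)/2 - n'(2\log b' + 2)$. Using Lemma~\ref{lem:xbit_stringFlex} I would assemble two $\Theta(\log b)$-gap $\lceil \log m \rceil$-bit string pads representing $m$, each now of constant width~$9$ rather than~$7$; this constant width increase is immaterial to the downstream argument. The same constant-tile modifications apply: making the first and last bits' glues unique, raising the first bit's glue to strength~$2$, and adding a unique strength-$2$ glue ${\rm C2}$ to combine the north- and east-facing pads with a single tile type. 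The generic information-transfer tiles that bridge the $\Theta(\log b)$ gap down to the spacing-$0$ expected by the counter, and the large-$b$ fallback (roughly $\log b > n$) that builds the pad directly at spacing~$0$, both carry over unchanged.

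For the stage count, Lemma~\ref{lem:xbit_stringFlex} produces each seed pad in $\BO{\frac{\lceil \log m \rceil - tb - t^2}{b^2} + \frac{\log\log b}{\log t}}$ stages, and since $\lceil \log m \rceil = \Theta(\log n)$, the dominant $t\log t$ term of Theorem~\ref{thm:nxn} is replaced by $t^2$, yielding the claimed $\BO{\frac{\log n - t^2 - tb}{b^2} + \frac{\log\log b}{\log t}}$. The one point requiring care---rather than a genuine obstacle---is glue non-interference: the flexible glue function of Lemma~\ref{lem:xbit_stringFlex} assigns positive strengths only among its internal decompression glues, so I would keep that glue alphabet disjoint from the counter's glue alphabet and set all cross-strengths to~$0$, ensuring the flexible bonds cannot spuriously attach counter tiles and that each bin still uniquely produces its terminal assembly. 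With disjointness enforced, unique production of the final $n \times n$ square follows exactly as in Theorem~\ref{thm:nxn}.
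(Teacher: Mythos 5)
Your proposal is correct and follows essentially the same route as the paper, which obtains Theorem~\ref{thm:nxnFlex} precisely by substituting Lemma~\ref{lem:xbit_stringFlex} for Lemma~\ref{lem:xbit_string} in the construction of Theorem~\ref{thm:nxn}. Your additional observations (the width-$9$ pads and keeping the flexible glue strengths zero against the counter's glue alphabet) are sound details consistent with, and slightly more explicit than, the paper's one-line argument.
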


\begin{theorem}
\label{thm:nxnLowerFlex}
For any $b,t\in\mathbb{N}$ and almost all $n\in\mathbb{N}$, any staged assembly system with flexible glues which uses at most $b$ bins and $t$ tile types whose uniquely produced output is an $n\times n$ square must use $\Omega(\frac{\log{n} - t^2 - tb}{b^2})$ stages.
\end{theorem}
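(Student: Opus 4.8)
The plan is to mirror the proof of Theorem~\ref{thm:nxnLower}, substituting the flexible-glue clause of Lemma~\ref{lem:info-lowerbound} for its standard-glue counterpart. The argument is purely information-theoretic: a staged system whose unique output is an $n \times n$ square is implicitly an encoding of the integer $n$, so the number of bits needed to specify the system must be at least the incompressible description length of $n$, which is $\Omega(\log n)$ for almost all $n$.

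First I would fix $b$ and $t$ and consider an arbitrary flexible-glue staged system $\Gamma$ using at most $b$ bins, at most $t$ tile types, and $s$ stages whose uniquely produced output has the shape of an $n \times n$ square. By the flexible-glue part of Lemma~\ref{lem:info-lowerbound}, $\Gamma$ is specifiable with $\BO{t^2 + sb^2 + tb}$ bits. I would then observe that from such a specification one can recover $n$ by a single fixed procedure: simulate the deterministic temperature-$\tau$ assembly process of $\Gamma$, read off its unique terminal output, and measure the side length of the resulting square. Since this decoder is a fixed algorithm independent of $\Gamma$, it contributes only an additive $\BO{1}$, so that $K(n) \le C(t^2 + sb^2 + tb) + \BO{1}$ for some absolute constant $C$.

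Next I would invoke the standard incompressibility counting bound to supply the ``almost all'' quantifier. Among the integers in $\{1, \dots, N\}$, fewer than $2^{\log N - k}$ admit a description shorter than $\log N - k$ bits, so taking $k = k(N) \to \infty$ slowly (say $k = \log\log N$) shows that a $1 - o(1)$ fraction of $n \le N$ satisfy $K(n) \ge \log n - o(\log n)$. For every such $n$, combining with the previous estimate gives $C(t^2 + sb^2 + tb) + \BO{1} \ge \log n - o(\log n)$; solving for $s$ isolates $sb^2 \ge \Omega(\log n) - \BO{t^2 + tb}$ and hence $s = \Omega\left(\frac{\log n - t^2 - tb}{b^2}\right)$, the bound carrying content precisely in the regime $\log n = \omega(t^2 + tb)$ and being vacuously true otherwise.

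I expect essentially no genuine obstacle here, since the only point of departure from Theorem~\ref{thm:nxnLower} is the replacement of the $t\log t$ cost of specifying a standard glue function with the $t^2$ cost of specifying an arbitrary pairwise flexible-glue strength function, which is already furnished by Lemma~\ref{lem:info-lowerbound}; no new construction is required. The one place to be careful is the interaction of the $\BO{\cdot}$ additive constants with the multivariate $\Omega$: I would phrase the conclusion as the existence of constants $c_1, c_2, c_3 > 0$ with $s \ge (c_1 \log n - c_2 t^2 - c_3 tb)/b^2$, which is the intended reading of $\Omega(\frac{\log n - t^2 - tb}{b^2})$, and remark that the constant $1/C$ lost to the specification overhead is absorbed into $c_1$ rather than weakening the asymptotic form of the statement.
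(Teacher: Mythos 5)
Your proposal is correct and follows essentially the same route as the paper, which derives this bound directly from the flexible-glue clause of Lemma~\ref{lem:info-lowerbound} (the $\BO{t^2 + sb^2 + tb}$ specification size) combined with the observation that almost all $n$ require $\Omega(\log n)$ bits to describe, exactly as in Theorem~\ref{thm:nxnLower}. Your added care about the fixed decoder and the handling of the multivariate $\Omega$ constants is sound but does not depart from the paper's argument.
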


\begin{theorem} \label{thm:shapesFlex}
There exists a constant $c$ such that for any shape $S$ and $b,t\in\mathbb{N}$ with $b,t > c$, there exists a $\tau = 2$ staged assembly system with flexible glues with $b$ bins, $t$ tile types, and $\mathcal{O}(\frac{K (S)-t^2 - tb}{b^2} + \frac{\log \log b}{\log t})$ stages whose uniquely produced output is $S$ at some scale factor.
\end{theorem}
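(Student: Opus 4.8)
The plan is to lift the proof of Theorem~\ref{thm:shapes} essentially verbatim, substituting the flexible-glue bit string pad construction of Lemma~\ref{lem:xbit_stringFlex} for the standard one of Lemma~\ref{lem:xbit_string}, and the flexible-glue square construction of Theorem~\ref{thm:nxnFlex} for that of Theorem~\ref{thm:nxn}. The shape-building machinery of Soloveichik and Winfree~\cite{SolWin07} is left untouched: it remains a fixed, constant-sized tile set that simulates a universal Turing machine, filling in a scaled copy of $S$ starting from a seed block that encodes a shortest program for $S$. The only change is that the varying tile set encoding the program is replaced by bit string pads built with flexible glues, which pack $\Theta(t^2)$ bits into the tile types rather than $\Theta(t \log t)$, and this is exactly what converts the $t\log t$ term in the numerator into $t^2$.

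First I would reserve a constant number $c$ of tile types for the fixed simulation set and set $t' = \frac{t-c}{5}$ and $b' = \frac{b-1}{5}$, so that each of the four bit string pads and the core square receives a constant fraction of the available tiles and bins. Here one point deserves an explicit check: flexible glues are a global property of the glue function, yet the fixed simulation tiles must still behave with matching-only bonding. This poses no obstruction, since flexible glues are strictly more expressive than standard glues, and we may simply assign the desired (matching-only) strengths to the simulation tiles' glues within the global flexible glue function; the added flexibility is used solely by the pad construction and never forces unwanted interactions among the simulation tiles.

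Next I would invoke Lemma~\ref{lem:xbit_stringFlex} four times to assemble four width-$9$ gap-$\Theta(\log b)$ $K(S)$-bit string pads, each encoding the program, each in $\mathcal{O}(\frac{K(S) - t'^2 - t'b'}{b'^2} + \frac{\log\log b'}{\log t'})$ stages. These attach to the four sides of a $\Theta(K(S)\log K(S)) \times \Theta(K(S)\log K(S))$ square built via Theorem~\ref{thm:nxnFlex} using $t'$ tiles and $b'$ bins, producing the seed block exactly as in Figure~\ref{seedblock}. As in the standard case, the nonzero gap between exposed bits is reconciled with the adjacency-expecting simulation tiles by augmenting the fixed tile set with $\mathcal{O}(1)$ generic information-transfer tiles, and running the simulation consumes one additional stage.

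The final step is the complexity accounting. Summing the four pad constructions, the core-square construction (whose cost $\mathcal{O}(\frac{\log(K(S)\log K(S)) - t'^2 - t'b'}{b'^2} + \frac{\log\log b'}{\log t'})$ is dominated by the $K(S)$ term from the pads), and the $\mathcal{O}(1)$ finishing stages yields $\mathcal{O}(\frac{K(S) - t'^2 - t'b'}{b'^2} + \frac{\log\log b'}{\log t'})$; since $t'$ and $b'$ are constant fractions of $t$ and $b$, this simplifies to $\mathcal{O}(\frac{K(S) - t^2 - tb}{b^2} + \frac{\log\log b}{\log t})$ because $t'^2 = \Theta(t^2)$, $t'b' = \Theta(tb)$, and $b'^2 = \Theta(b^2)$. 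I do not anticipate a genuine obstacle, as the entire argument is a routine adaptation of Theorem~\ref{thm:shapes}; the only care required is confirming that the $t^2$ term (rather than $t\log t$) propagates correctly through the constant-fraction substitution, and the nontrivial content lives in the assumed Lemma~\ref{lem:xbit_stringFlex}.
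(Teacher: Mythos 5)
Your proposal is correct and takes essentially the same approach as the paper, which obtains Theorem~\ref{thm:shapesFlex} precisely by substituting Lemma~\ref{lem:xbit_stringFlex} for Lemma~\ref{lem:xbit_string} (and the flexible-glue square of Theorem~\ref{thm:nxnFlex} for Theorem~\ref{thm:nxn}) in the proof of Theorem~\ref{thm:shapes}, stating the result without further detail. Your explicit check that the fixed Soloveichik--Winfree simulation tiles remain matching-only within the global flexible glue function is a point the paper leaves implicit, and your accounting mirrors the paper's.
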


\begin{theorem}
\label{thm:shapesLowerFlex}
For any $b,t\in\mathbb{N}$ and shape $S$ with Kolmogorov complexity $K(S)$, any staged assembly system with flexible glues which uses at most $b$ bins and $t$ tile types whose uniquely produced output is $S$ must use $\Omega(\frac{K(S) - t^2 - tb}{b^2})$ stages.
\end{theorem}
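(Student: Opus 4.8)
The plan is to argue information-theoretically, mirroring the proof of Theorem~\ref{thm:shapesLower} but invoking the flexible-glue clause of Lemma~\ref{lem:info-lowerbound}. First I would fix an arbitrary staged assembly system $\Gamma$ with flexible glues using at most $b$ bins, $t$ tile types, and $s$ stages, whose uniquely produced output has shape $S$. By the flexible-glue bound of Lemma~\ref{lem:info-lowerbound}, $\Gamma$ admits a complete description as a bit string $\sigma$ of length $\BO{t^2 + sb^2 + tb}$; crucially, this description includes the full (flexible) glue strength function, so $\Gamma$ is recoverable from $\sigma$ in its entirety.

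Next I would exhibit a fixed Turing machine $U$ (independent of both $\Gamma$ and $S$) that, on input $\sigma$, reconstructs $\Gamma$, deterministically simulates its staged assembly process bin-by-bin and stage-by-stage, and outputs a canonical encoding of the shape of the system's final output. Because $\Gamma$ uniquely produces its output and that output is the finite shape $S$, this simulation is well-defined, halts, and produces a description of $S$ (say, its occupied cells up to translation). Hence $\sigma$ together with the constant-size code of $U$ is a program computing $S$, so by the definition of Kolmogorov complexity $K(S) \le |\sigma| + \BO{1} = \BO{t^2 + sb^2 + tb}$.

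It then remains only to rearrange. From $K(S) \le c\,(t^2 + sb^2 + tb)$ for some constant $c$, we obtain $s\,b^2 \ge K(S)/c - t^2 - tb - \BO{1}$, and therefore $s = \Omega\!\big(\frac{K(S) - t^2 - tb}{b^2}\big)$, which is the claimed bound. When $K(S) - t^2 - tb$ is nonpositive the statement is vacuous, so we may assume $K(S)$ dominates the subtracted terms.

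The main obstacle is formalizing the simulator $U$ and the claim that its description length is a constant independent of $S$: one must verify that the staged-assembly semantics (producible and terminal assemblies, unique production, and selection of the final-stage output) are computable by a single fixed machine that halts on every valid $\sigma$ whose system uniquely produces a finite shape, and that the flexible glue function recovered from $\sigma$ is used faithfully in evaluating $\tau$-combinability. Once $U$ is in place, everything reduces to the algebraic rearrangement above, which is identical in form to the standard-glue case and differs only in replacing the $t\log t$ term by $t^2$.
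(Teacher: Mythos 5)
Your proposal is correct and follows essentially the same route as the paper: the paper also derives this bound directly from the flexible-glue clause of Lemma~\ref{lem:info-lowerbound}, observing that a system uniquely producing $S$ constitutes a description of $S$ in $\BO{t^2 + sb^2 + tb}$ bits, so $K(S) = \BO{t^2 + sb^2 + tb}$ and the stage bound follows by rearrangement. Your explicit construction of the fixed simulator $U$ and your remark about the regime where $K(S)$ dominates the subtracted terms merely spell out details the paper leaves implicit.
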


\section{Conclusion}\label{sec:conclusion}

In this work, we achieve nearly optimal staged assembly of two classic benchmark shape classes.
These constructions generalize the known upper bounds of~\cite{AdChGoHu01,AGKS05g,DDFIRSS07,SolWin07} to all combinations of tile type and bin complexity, as well as to the flexible glue model.

The obvious technical question that remains is whether the additive $\BO{\frac{\log{\log{b}}}{\log{t}}}$ gap between the upper and lower bounds can be removed.
This gap in our constructions is induced by the wings subconstruction of Section~\ref{sec:wings}, and seems difficult to eliminate, as the wings serve as our generic solution to assembly labeling and coordinated attachment.
As such, the wings subconstruction might be useful for improving the efficiency of staged assembly for other shape classes.

Finally, the constant width of the bit string pads we construct was not exploited here, but may be useful for assembling shapes with geometric bottlenecks, e.g., thin rectangles.

\bibliographystyle{plain}
\bibliography{main}

\end{document}